\newcommand{\blind}{1}
\newcommand{\p}{\mathbb{P}}
\newcommand{\cov}{\mathrm{Cov}}
\newcommand{\argmin}{\mathrm{argmin}}
\newcommand{\tr}{\mathrm{tr}}
\newcommand{\mbzero}{{\mathbf{0}}}
\newcommand{\gcal}{\mathcal{G}}
\newcommand{\bR}{\mathbb{R}}
\newcommand{\bS}{\mathbb{S}}
\newcommand{\norm}[1]{\Vert{#1}\Vert}
\newcommand{\BEAS}{\begin{eqnarray*}}
\newcommand{\EEAS}{\end{eqnarray*}}
\newcommand{\BEA}{\begin{eqnarray}}
\newcommand{\EEA}{\end{eqnarray}}
\newcommand{\BEQ}{\begin{equation}}
\newcommand{\EEQ}{\end{equation}}
\newcommand{\BIT}{\begin{itemize}}
\newcommand{\EIT}{\end{itemize}}
\newcommand{\Rank}{\mathop{rank}}
\newcommand{\diag}{\mathop{diag}}
\newtheorem{thm}{Theorem}
\newtheorem{defi}{Definition}
\newtheorem{algo}{Algorithm}
\newtheorem{lem}{Lemma}
\newtheorem{ass}{Assumption}
\theoremstyle{definition}
\newcommand{\printfnsymbol}[1]{%
  \textsuperscript{\@fnsymbol{#1}}%
}
\def\text#1{\mbox{\rm #1}}
\begin{document}




\if1\blind
{
  \title{\bf Compressed spectral screening for large-scale differential correlation analysis with application in selecting Glioblastoma gene modules }
\author{Tianxi Li, Xiwei Tang\\ Department of Statistics, University of Virginia
\and Ajay Chatrath\\ Department of Biochemistry and Molecular Genetics, University of Virginia}
\date{\today}

  \maketitle
} \fi

\if0\blind
{
  \bigskip
  \bigskip
  \bigskip
  \begin{center}
    {\LARGE\bf Compressed spectral screening for large-scale differential correlation analysis with application in selecting Glioblastoma gene modules}
\end{center}
  \medskip
} \fi

\bigskip

\abstract{
Differential co-expression analysis has been widely applied by scientists in understanding the biological mechanisms of diseases. However, the unknown differential patterns are often complicated; thus, models based on simplified parametric assumptions can be ineffective in identifying the differences. Meanwhile, the gene expression data involved in such analysis are in extremely high dimensions by nature, whose correlation matrices may not even be computable. Such a large scale seriously limits the application of most well-studied statistical methods. This paper introduces a simple yet powerful approach to the differential correlation analysis problem called compressed spectral screening. By leveraging spectral structures and random sampling techniques, our approach could achieve a highly accurate screening of features with complicated differential patterns while maintaining the scalability to analyze correlation matrices of $10^4$--$10^5$ variables within a few minutes on a standard personal computer. We have applied this screening approach in comparing a TCGA data set about Glioblastoma with normal subjects. Our analysis successfully identifies multiple functional modules of genes that exhibit different co-expression patterns. The findings reveal new insights about Glioblastoma's evolving mechanism. The validity of our approach is also justified by a theoretical analysis, showing that the compressed spectral analysis can achieve variable screening consistency.}

\section{Introduction}

\subsection{Differential expression and co-expression analysis} \label{sec:intro_dca}
High-throughput RNA sequencing (RNA-seq) data have recently drawn great attention in genomic studies \citep{anders2010differential, soneson2013comparison, zhang2014comparative}.  As a powerful tool to quantify the abundance of mRNA transcripts in a sample, RNA-seq data have increasingly been used to identify differentially expressed genes associated with specific biological and clinical phenotypic variations \citep{krupp2012rna,lonsdale2013genotype}.  
For example, differential gene expression analysis can be adopted to detect mRNA transcripts with varying abundances in tumor samples versus normal tissue samples  \citep{wan2015bioxpress,li2016cancer}.  RNA-seq data therefore represent a popular alternative to microarrays in such work \citep{tarazona2011differential, costa2017rna}.

Conventional differential expression analysis focuses on comparing marginal gene expression levels between conditions or experimental groups \citep{wang2007detecting, li2011adaptively, trapnell2012differential, sun2015isodot, zhao2017sparse, dadaneh2018bnp}.  A complete understanding of the molecular basis of phenotypic variation also requires characterizing the interactions between genetic components \citep{ballouz2015guidance, van2018single}.  Many clustering algorithms have been proposed to identify groups of co-expressed genes (see \cite{sarmah2021study} for a review of popular tools used to analyze RNA-seq data). In many cases, researchers’ primary interest lies in discerning co-expression pattern variation among genes.  For example,   \citep{hudson2009differential} discovered that the key myostatin gene is not differentially expressed between two cattle breeds but exhibits distinct co-expression structures of the involved causal mutations. Statistical methods appropriate for this type of analysis remain understudied.

A widely used strategy for statistical differential co-expression analysis is modeling differential networks (see \cite{shojaie2021differential} for a detailed review). In particular, many approaches involve fitting Gaussian graphical models with certain sparsity assumptions  \citep{friedman2008sparse, yuan2010high, cai2011constrained}.  One method entails jointly estimating multiple precision matrices  \citep{chiquet2011inferring, guo2011joint,saegusa2016joint}, wherein the individual matrices are often assumed to be sparse.   \cite{danaher2014joint}  introduced the fused graphical lasso approach to encourage fusion over entries of estimated precision matrices.  \cite{zhao2014direct}  extended \cite{cai2011constrained}'s constraint $\ell_1$-minimization algorithm to directly estimate the differential network, that is, the difference of two precision matrices. \cite{xia2015testing} proposed a testing framework to identify the difference between two partial dependence networks. \cite{yuan2017differential} proposed a D-trace loss function with Lasso penalty to estimate a differential network with sparsity. Despite its relative popularity, this class of approaches suffers from many drawbacks in our application scenario. First, the partial correlation structure is usually estimable based on the Gaussian assumption of data. However, non-normal distributions of gene expression data have been widely documented  \citep{marko2012non}, and growing evidence suggests a non-linear relationship between genes \citep{yang2021model}. Second, partial dependence interpretation is less straightforward than marginal dependence and thus appears less frequently in biological and medical studies. Third, due to the noisy nature of data, medical researchers typically prefer to adopt robust versions of dependence metrics \citep{chatrath2020pan}, such as the Spearman correlation coefficient; less is known about statistical methods related to robust partial dependence. Finally, differential analysis based on partial dependence tends to be computationally expensive and unable to tackle the scale of our problem. We therefore focus on the differential analysis of marginal co-expression dependence in this paper. 
 
 The differential analysis of marginal correlations has been considered in several contexts, mainly when testing variations. \cite{schott2007test} and \cite{li2012two} consider testing two covariance matrices that are different in most entries, while \cite{cai2013two,cai2016inference}, and \cite{chang2017comparing} introduced methods appropriate for cases when differential patterns are sparsely distributed. In our problem (to be introduced in the next section), we aim to discover differentially co-expressed genes between cancer patients and healthy subjects. This differential pattern does not exist everywhere in a correlation matrix. Yet when considering the correlation structure, each gene simultaneously affects all its correlation entries. The expected differential pattern of the correlation matrices should hence be globally sparse but locally dense. Therefore, neither of the above two scenarios reasonably approximates our case. \cite{zhu2017testing} rrevealed that spectral properties are far more effective for detecting this type of differential signal.  In particular, \cite{zhu2017testing} proposed using the spectral norm of differential covariance matrices as the test statistic for the differential hypothesis. However, differential gene selection is not studied in their work. 
 
In this paper, we consider a general correlation matrix which enables us to adopt more flexible data-based correlation measurements and thus better capture the co-expression structure. Our proposed method is grounded in spectral methods. Compared with prior work on the differential analysis of covariance/correlation matrices, our primary contributions are two-fold.  Firstly, we introduce a “spectral screening” method, which identifies the differentially co-expressed genes instead of simply testing whether there are differences. Our method considers block differences, which are more informative for our application than the aforementioned studies. Our approach can also be seen as a submatrix localization method. However, different from existing matrix localization methods, we do not enforce additional assumptions on the differential pattern  (e.g., the sign constraint in \cite{butucea2015sharp}, or the constant difference in \cite{chen2016statistical, cai2017computational,liu2019multiscale}). This generality affords us appreciable advantages in analyzing real-world datasets. Second, we present a simple yet powerful randomized version of the localization method that significantly improves the scalability of our spectral screening approach to handle large-scale differential analysis; the previous mentioned methods cannot do so. Specifically, we calculate a small proportion of correlation differences and then use the spectral properties of the incomplete differential matrix to localize differential variables. This strategy possesses much lower computational complexity while maintaining sound accuracy and theoretical guarantees.
 
\subsection{Motivating application and the data set} \label{sec:dataset}

The Cancer Genome Atlas (TCGA) is a large set of tumor and normal tissue samples collected from over 10,000 patients cataloging molecular abnormalities observed across 33 cancer types  \citep{cancer2015comprehensive}. The Genotype Tissue Expression (GTEx) project is a large-scale sequencing project including more than 9,000 samples of 53 different tissues from over 500 healthy individuals  \citep{lonsdale2013genotype,gtex2015genotype,wang2018unifying}. Scholars have identified differentially expressed genes in multiple tumor types via differential expression analysis by comparing tumor samples from TCGA to normal tissue samples from GTEx after batch-effects correction. However, gliomas were excluded from these studies due to a lack of normal brain samples in TCGA  \citep{wang2018unifying}. Additionally, it is impossible to normalize the samples for differences in batch effects across studies when using standard approaches given this absence of normal brain samples. This paper focuses on differential patterns of correlation rather than expression levels. As such, our approach offers an alternative to conventional differential analysis to discern meaningful differential patterns when normalization is not possible through standard methods.

Gliomas are highly invasive primary brain tumors that are challenging to resect neurosurgically without substantial patient morbidity. Approximately 20,000 patients are diagnosed with gliomas each year in the United States. Gliomas can be divided into several grades according to severity. Grade II and Grade III gliomas are lower-grade gliomas; Grade IV gliomas are also known as glioblastoma multiforme. Glioblastoma multiforme is highly lethal and has a five-year survival rate of less than 5\%. Although lower-grade gliomas do not progress as quickly as glioblastoma multiforme, they are essentially uniformly lethal and can progress to glioblastoma multiforme; the median period of survival for lower-grade gliomas is seven years. Typically, lower-grade gliomas and glioblastoma multiforme are first managed surgically, followed by chemotherapy and radiation therapy   \citep{cancer2015comprehensive, bauman2009adult,yan2012molecular,louis20162016}. The 2016 World Health Organization’s guidelines for brain tumors suggest that gliomas should be classified based on molecular characteristics \citep{louis20162016}. Given recent recognition of the roles of molecular characteristics in classifying gliomas and the lethality of these tumors, it is crucial to identify genes of interest warranting further study. We compare glioblastoma multiforme samples from the TCGA dataset to normal brain samples from the GTEx dataset using the approach described in this paper. As mentioned, lower-grade gliomas (Grade II or III) have a better prognosis than glioblastoma multiforme (Grade IV) but can progress to glioblastoma multiforme. We hypothesize that lower-grade gliomas would show an intermediate degree of dysregulation between glioblastoma multiforme and normal brain tissue. Specifically, we use lower-grade gliomas as a validation set in this study, as suggested in prior research \citep{lonsdale2013genotype, cancer2015comprehensive, gtex2015genotype, yan2012molecular,louis20162016}. The key challenge in this setting lies in the dimensionality of the problem: we have a few hundred observations among 51,448 genes. It would be impossible to calculate all correlation coefficients for this number of genes --- never mind performing in-depth differential analysis. A computationally feasible analysis would remain elusive even with basic filtering. Our method is intended to handle differential correlation analysis of such large-scale problems. A detailed analysis is presented in Section~\ref{sec:data-analysis}.
 
%
%
%
%
%
%
\paragraph{Organization of the paper.} Section \ref{sec:method} introduces the proposed methodology. Practical considerations such as parameter tuning and implementation, and complexity analysis are discussed in Section~\ref{sec:tune}. Section \ref{sec:sim} illustrates the model performance by simulation studies. Section \ref{sec:data-analysis} presents our real data analysis findings. Section \ref{sec:theory} provides theoretical results.  Section \ref{sec:discussion} concludes with further discussions.







\section{Methodology}\label{sec:method}

\paragraph{Notations.} Given a positive integer $p$, define $[p] = \{1, 2, \cdots, p\}$. Let $\bR^{p\times p}$ be the set of all $p\times p$ matrices, and $\bS_+^{p\times p}$ be the set of all $p\times p$ positive definite matrices. Given a square matrix $M$, denote its spectral norm and Frobenius norm by $\norm{M}$ and $\norm{M}_F$, respectively. For two sequences $\{a_n\}$ and $\{b_n\}$, we write $a_n = O(b_n)$ if $a_n/b_n$ is bounded for all sufficiently large $n$. We also write $a_n = o(b_n)$ if $a_n/b_n \to 0$ as $n$ goes to $\infty$, in which case we may also use the notation $a_n \ll b_n.$ Given a $p\times p$ matrix $M$ and two sets $\gcal_1, \gcal_2 \subset [p]$, we write $M_{\gcal_1,\gcal_2}$ as the submatrix from constraining on rows in $\gcal_1$ and columns in $\gcal_2$.

We will describe our algorithms assuming the statistic of interest is the covariance matrix. The study of correlation matrix just needs a rescaling step, so will not be distinguished conceptually.  Assume we have a size-$n_1$ random sample $x_i, i=1, \cdots, n_1$ from a distribution with mean $\mu_1$ and covariance $\Sigma_1$ and a size-$n_2$ sample  $y_i, i=1, \cdots, n_2$ from a distribution with mean $\mu_2$ and covariance  $\Sigma_2.$
Here, $\mu_1, \mu_2 \in \bR^p$ and $\Sigma_1, \Sigma_2 \in S_+^{p\times p}$. We do not assume special structures (e.g., sparsity) for the individual matrix $\Sigma_1$ (or $\Sigma_2$). We are interested in the situation where $\max(n_1, n_2) \ll p$. Even loading the matrices $\Sigma_1$ and $\Sigma_2$ into a computer’s memory can be challenging, never mind the associated calculation.    Fortunately,  in many differential correlation analyses, it is reasonable to assume that there is only a small set of coordinates $\gcal \subset [p]$ lead to different correlations, with $|\gcal| = m \ll p$. That is,  
$$\Sigma_{1, ij} \ne \Sigma_{2,ij}, \text{~~only if~~} i, j \in \gcal.$$

Our main objective in this paper is to identify $\gcal$. The challenge lies in the problem scale. It is infeasible to examine $\gcal$ even by calculating all pairwise correlations.  Our example in Section~\ref{sec:dataset} involves more than $50,000$ genes , yielding about $2.5 \times 10^9$ marginal correlations for two samples.  Therefore, we need a rapid screening method  that can reduce the problem to a more manageable size for downstream analysis with precise detection.

Specifically, let $\hat{\Sigma}_1$ and $\hat{\Sigma}_2$ be the sample covariance matrices of $X_1$ and $X_2$, respectively. Define $D = \Sigma_1 -\Sigma_2$ and $\hat{D} = \hat{\Sigma}_1 - \hat{\Sigma}_2$. We will first introduce a simple screening method for $\gcal$ based on $\hat{D}$ in Section~\ref{sec:screening}. Then  in Section~\ref{sec:matrix-completion}, we will approximate $\hat{D}$ without computing the full matrix for the screening purposes. 

\subsection{Spectral screening}\label{sec:screening}

Spectral algorithms constitute a family of methods intended to handle structures for large-scale datasets, ranging from classical clustering analysis \citep{shi2000normalized,ng2002spectral}, to more complicated data structures such as text data \citep{anandkumar2015spectral}, time series \citep{hsu2012spectral} and network data \citep{rohe2011spectral,lei2014consistency,li2020community,miao2021informative}. From a macro perspective, our approach echoes network method of \cite{miao2021informative}. The primary difference is that we are working with differential correlation matrices instead of a single network. More importantly, in our case, the dataset has a computationally prohibitive scale; this analysis thus calls for a different spectral method design. We will begin with a discussion of our problem’s spectral properties.

Denote the rank of $D$ by $K$. We have $K \le m$ in the current context. Let $D = U\Lambda U^T$ be the eigen-decomposition of $D$, where $\Lambda = \diag(\lambda_1, \cdots, \lambda_K)$ is a square diagonal matrix of all the nonzero eigenvalues of $D$ (with non-increasing magnitude), and $U = (u_1, \cdots, u_K)$ consists of the corresponding eigenvectors.   Without loss of generality, throughout this paper, we  assume $\gcal$ is the set of the first $m$ variables. That is, $\gcal = [m] $. In this case, let $U_1$ be the matrix of first $m$ rows in $U$ and $U_2$ be the matrix of the last $p-m$ rows.
We have
$$D = \begin{pmatrix}
 D_{\gcal,\gcal} & \mbzero_{m\times (p-m)} \\
\mbzero_{(p-m)\times m}  & \mbzero_{(p-m)\times (p-m)}  \\
 \end{pmatrix}
 =  \begin{pmatrix}
 U_{1}\\
U_{2}\\
 \end{pmatrix} \Lambda U^T
 = \begin{pmatrix}
 U_{1}\Lambda U^T\\
U_{2}\Lambda U^T\\
 \end{pmatrix}.
$$
Such a pattern indicates
$$\mbzero_{(p-m)\times p} = U_{2}\Lambda U^T$$
and leads to
$$U_{2} = \mbzero_{(p-m)\times n}U\Lambda^{-1} = \mbzero_{(p-m)\times K}.$$
This relation suggests a simple strategy to identify $\gcal$ based on the rows in $U$, as these rows fully capture the sparsity pattern of the differential correlation structure. More notably,  the current property does not depend on any additional structural assumption about $D_{\gcal,\gcal}$, and can hence potentially be more general than many other methods. The algorithm based on the above idea is summarized as follows:

\begin{algo}[Spectral screening \texttt{SpScreen}$(\hat{D},K)$]\label{algo:spscreen}
Given the differential correlation matrix $\hat{D}$, a positive integer $K$:
\begin{enumerate}
\item Calculate the eigen-decomposition of $\hat{D} = \hat{U}\hat{\Lambda} \hat{U}^T$ up to rank $K$.
\item Take the screening score $s_i =\norm{\hat{R}_{i \cdot}}$ where $\hat{R} = \hat{U}\hat{\Lambda}^{1/2}$. 
\item Return the spectral score $\{s_i\}$.
\item (Optional) If a threshold vector $\Delta \in \bR_+^p$ is available, select all variables with $s_i > \Delta_i$. 
\end{enumerate}
\end{algo}
\noindent
We expect that $K$ is an integer that is approximately the rank of $D$. It is usually unknown and will be treated as a tuning parameter. The strategy of tuning $K$ will be discussed in Section~\ref{sec:tune}.

\subsection{Fast approximation by random sampling}\label{sec:matrix-completion}
Algorithm~\ref{algo:spscreen} requires the input of $\hat{D} = \hat{\Sigma}_1 - \hat{\Sigma}_2$. In the scenario of our application problems, $p$ is large, and both $\hat{\Sigma}_1$ and $ \hat{\Sigma}_2$ tend to be dense; it would therefore be impractical to calculate the matrices. Alternatively, we resort to approximation methods for screening. Our solution is based on the following two observations: 1) Algorithm~\ref{algo:spscreen} only needs  the leading eigenspaces; 2) the rank of $\hat{D}$ is at most $n_1+n_2$, much smaller than $p$. Because low-rank matrices can be represented efficiently with far fewer entries than $O(p^2)$ \citep{chatterjee2015matrix,li2016network,abbe2017entrywise,chen2019noisy}, it becomes possible to approximate $\hat{D}$ or its eigen-space without needing to know all the entries.  Specifically, we sample a subset of entries in $D$, with each entry being sampled independently with a pre-specified probability $\rho$. Instead of calculating the full $\hat{D}$ for all $p(p-1)/2$ entries, we calculate the sample covariance values on these sampled positions only. Depending on the problem size, using a small $\rho$ (e.g., $\rho < 0.05$) can greatly conserve computational time and memory. Let $\tilde{D}$ be the incomplete matrix with the only values on the subsampled entries. We then approximate the spectral structure of $D$ based on the sparse approximation $\tilde{D}$ with only those sampled entries filled. Analogous to matrix completion problems, we impute the missing entries by zeros and then calculate the corresponding eigenstructures as our approximation. The independent sampling strategy allows for accurate eigenstructures. Our two-step screening algorithm is summarized below. In light of its connections with compressed sensing and matrix completion problems,  we call it \underline{C}ompressed \underline{S}pectral \underline{S}creening (CSS).

\begin{algo}[Compressed spectral screening: \texttt{CSS}$(X_1,X_2, \rho,K)$]\label{algo:css}
Given observation matrices $X_1, X_2$, sampling proportion $\rho$ and $K$. Initiate zero (sparse) matrices $\tilde{D}, \Omega \in \bR^{p\times p}$. Do the following:
\begin{enumerate}
\item For each $(i,j) \in [p]\times [p]$ with $i < j$
\begin{enumerate}
\item Sample $\Omega_{ij} \sim \text{Bernoulli}(\rho)$;
\item If $\Omega_{ij} = 1$, calculate $\hat{D}_{ij} = \cov(X_{1,i\cdot},X_{1,j\cdot}) -\cov(X_{2,i\cdot},X_{2,j\cdot})  $. 
\end{enumerate}
\item Set $\tilde{D}_{ij}  = \hat{D}_{ij}/\rho$.
\item Return \texttt{SpScreen}$(\tilde{D},K)$.
\end{enumerate}
\end{algo}

This subsampling strategy can greatly accelerate spectral screening for large-scale datasets, assuming well-designed implementation. The subsampling and covariance calculation are each simple to implement in a parallel setting; as such, the computational speed can be further improved when a distributed system is available. The CSS algorithm also removes the crucial memory constraint of the originally infeasible problem and renders the computation more scalable. 

\section{Practical considerations: tuning and computation}\label{sec:tune}
\subsection{Tuning parameter selection}
\paragraph{Tuning the rank $K$.}  $K$ be tuned by cross-validation as in many matrix completion problems \citep{chi2019matrix}. However, given the scope of our problem, a full-scale cross-validation is infeasible. We therefore replace the $K$-fold cross-validation procedure with basic random sampling validation, coupled with Algorithm~\ref{algo:css}. Each time, in the random sampling step of the CSS algorithm, we sample additional pairs of variables and validate prediction performance using different tuning parameters by running the eigendecomposition only once. This procedure is shown in Algorithm~\ref{algo:css-tune}.

\begin{algo}[Compressed spectral screening with tuning]\label{algo:css-tune}
Given observation matrices $X_1, X_2$, sampling proportion $\rho$, validation proportion $\tau$, and $K_{l} < K_{u}$. Initiate zero sparse matrices $\hat{D}, \tilde{D}, \Omega, \tilde{\Omega}, \Phi \in \bR^{p\times p}$. Do the following:
\begin{enumerate}
\item For each $(i,j) \in [p]\times [p]$ with $i < j$
\begin{enumerate}
\item Sample $\tilde{\Omega}_{ij} \sim \text{Bernoulli}\left((1+\tau)\rho\right)$; sample $\Phi_{ij} \sim \text{Bernoulli}(\frac{1}{1+\tau})$. Let $\Omega_{ij} = \tilde{\Omega}_{ij}\cdot \Phi_{ij}$. Notice that marginally, $\Omega_{ij}$ can be see as random Bernoulli samples with probability $\rho$.
\item If $\tilde{\Omega}_{ij} = 1$, calculate $\hat{D}_{ij} = \cov(X_{1,i\cdot},X_{1,j\cdot}) -\cov(X_{2,i\cdot},X_{2,j\cdot}) $. If $\Omega_{ij} = 1$, set $\tilde{D}_{ij} = \hat{D}_{ij}/\rho.$ 
\end{enumerate}
\item Calculate the partial eigen-decomposition up to rank $K_u$ for $\tilde{D}$, denoted by $\tilde{D} = \tilde{U}\tilde{\Lambda}\tilde{U}^T$.
\item For $K_l \le K \le K_u$:
\begin{enumerate}
\item Use the rank $K$ approximation  to predict entries for positions $(i,j)$ such that $\tilde{\Omega}_{ij}=1,{\Omega}_{ij} = 0$, calculated by $\hat{D}_{ij}^{(K)} = \tilde{U}_{i,1:K}\tilde{\Lambda}_{1:K,1:K}\tilde{U}_{j,1:K}^T$. For $K > K_l$, this can be computed by 
$$\hat{D}_{ij}^{(K)} = \hat{D}_{ij}^{(K-1)} + \tilde{\lambda}_K\tilde{U}_{iK}\tilde{U}_{jK}.$$
\item Calculate the loss $L_K = \sum_{(i,j):\tilde{\Omega}_{ij}=1,{\Omega}_{ij} = 0}(\hat{D}_{ij}-\hat{D}^{(K)}_{ij})^2$.
\end{enumerate}
\item Select $\hat{K} = \argmin_{K_l \le K \le K_u}L_K.$
\item Return $\texttt{SpScreen}(\tilde{D},\hat{K})$. Note that this step can use the previous partial decomposition in Step 2.
\end{enumerate}
\end{algo}
In Algorithm~\ref{algo:css-tune}, the estimated $\hat{D}^{(K)}$ is only based on the pairs with $\Omega_{ij}=1, \tilde{\Omega}_{ij}=1$, and the pairs with $\Omega_{ij}=0, \tilde{\Omega}_{ij}=1$ constitute the validation set. By default, we always use $\tau = 0.1$ in all experiments. It is easy to see that in the current context of two data matrices, the natural range would be $ K_l = 2$. Also, since we observe only $\rho$ proportion of the entries from $D$, which is a most rank $n_1+n_2$. To ensure a reasonable signal to noise ratio, we can constraint $K_u = \rho(n_1+n_2)$. Empirically, theses choices give very effective range for performance in all of our evaluations. Additional constraints from side information can be further enforced to reduce the computational cost and model selection variance.

\paragraph{Selection of sampling proportion $\rho$. } The larger $\rho$ is, the more information we can include for screening. We would always prefer to use a larger $\rho$ within the affordable computational limit. Also, it can be expected that if $\rho$ is too small, eventually we will not have sufficient information for meaningful screening.  Here we introduce an ad hoc rule to choose the lower bound of $\rho$ based on our theoretical analysis (see Section~\ref{sec:theory}), which works well in all of our experiments.  Matrix $\hat{D}$ has $p(p+1)/2$ entries calculated from $(n_1+n_2)p$ entries from the raw data. Therefore, intuitively, we need more than $(n_1+n_2)p$ entries, so as to retain reasonable amount of information to make the recovery problem feasible; accordingly, $\rho > 2(n_1+n_2)/(p+1)$.

\paragraph{Determining the selection threshold.} In the previous algorithms, e.g., Algorithm~\ref{algo:spscreen}, we did not specify how to determine the cutoff of the spectral scores to identify $\gcal$; this decision may be case-specific. In many situations, data analysts already have a sense of how many variables will be selected. Similarly, when the CSS is used for initial screening to reduce the problem size for a refined analysis (as in our analysis in Section~\ref{sec:data-analysis}), the threshold can be set to produce a feasible number of variables for the downstream algorithm. In other scenarios, we need an automatic and data-driven strategy to determine differential variables based on spectral scores. Resampling methods have been widely applied to determine reasonable selection in many problems \citep{meinshausen2010stability,Lei2020,le2020}. We assume a similar angle and propose a bootstrap procedure \citep{efron1979bootstrap} to determine $\Delta$. Because the spectral scores of differential variables are large, we should determine our selection by measuring the upper bound of the non-differential variables’ scores that one can expect to have. Although these scores are unknown, we can use bootstrapping to create a null distribution where all variables follow the same covariance structure. The procedure is shown below:
\begin{algo}[Stability selection for spectral screening]\label{algo:boot}
Given data $X_1 \in \bR^{n_1\times p}$, $X_2 \in \bR^{n_2\times p}$, the number of bootstrapping replications, $B$, and $K$ and $\rho$ in the spectral screening algorithm.
\begin{enumerate}
\item Let $S =$ \texttt{SpScreen}$(\tilde{D},K)$.
\item For $b = 1, 2, \cdots, B$, generate the bootstrap samples:
\begin{enumerate}
\item Sample $n_1$ rows from $X_2$ with replacement to stack into a matrix $\tilde{X}_1^{(b)}$, and $n_2$ rows from $X_2$ with replacement to stack into a matrix $\tilde{X}_2^{(b)}$.
\item Calculate the spectral score $S^{(b)}$ of the spectral screening on $\tilde{X}_1^{(b)}$ and $\tilde{X}_2^{(b)}$.
\end{enumerate}
\item Select each $j$ with $\frac{\sum_b I(S_j > S_j^{(b)})}{B} \ge 0.99$.
\end{enumerate}
\end{algo}
It is worth noting that here, the thresholds for the $p$ variable generally differ because the bootstrap procedure adapts to the variability of each variable. This method is quite effective in our evaluation  (see Section~\ref{sec:sim}). The stability selection requires employing spectral screening $B$ times, which carries a high computational cost.  Therefore, \emph{for large-scale differential analysis}, we recommend the following strategy: (1) apply compressed spectral screening to conduct a rough screening and significantly reduce the problem size; (2) use the full version of spectral screening on the reduced dataset; (3) adopt stability selection to determine the selection. We use this strategy for glioblastoma gene analysis in Section~\ref{sec:data-analysis}.

\subsection{Sampling implementation and complexity analysis}
The algorithms introduced thus far require efficient implementation to be scalable in practice. In this section, we provide additional computational details and a corresponding complexity analysis. We do not consider the extra tuning step because, as discussed, the tuning cost is in a lower order of complexity.  For simplicity, we assume that $n_1=n_2=n$ in this section. If not, all of the results in this section still hold after replacing $n$ with $n_1+n_2$.

The sampling step of $\Omega$ naive implementation of sampling the Bernoulli generator takes $O(p^2)$ operations, which is excessive when $p$ is large. Instead, the sampling step should be based on generating geometric random numbers. Given any ordered list of the  indices of paired nodes  and map them to $1, 2, \cdots, {{p}\choose{2}}$. Denote this mapping by $\pi: [p]\times [p] \to [ {{p}\choose{2}}] $. 
Notice that the gap between two consecutively sampled indices by the Bernoulli sampler follows a geometric distribution. Therefore, instead of sampling the status of each pair, we can use the geometric distribution to generate the sequence of the sampled positions.

\begin{algo}
Start with $Y = \emptyset$. While $\max\{ y \in Y\}  < {{p}\choose{2}}$, do the following:
\begin{enumerate}
\item Generate $Z = z$ from geometric distribution with probability mass function $\p(Z = z) = (1-\rho)^{z-1}\rho, z = 1, 2, \cdots$
\item If $Y = \emptyset$, append $z$ to $Y$. Otherwise, append $\max\{y \in Y\} + z$ to $Y$.
\end{enumerate}
Return $Y$ as the sampled indices of the order pairs. So $\pi^{-1}(Y)$ is  the sampled pairs of the original matrix.
\end{algo}
According to \cite{bringmann2013exact}, the complexity of generating a geometric random number is $O\left(\log(1/\rho)\right)$. The generating procedure of $\Omega$ according to the algorithm above is $O\left({p\choose{2}}\rho\log(1/\rho)\right)$. Compared with a naive sampling, we save a factor of $\rho\log(1/\rho)$ in complexity. The sampling step also involves computing the covariance values, which needs $O\left({p\choose{2}}\rho n\right)$ complexity.

The next computational chunk is the rank-$K$ eigen-decomposition. Notice that the matrix $\tilde{D}$ is now a sparse matrix with roughly $\rho p^2$ nonzero entries. The computational complexity of this step depends on the signal distribution of the matrix, but it is generally estimated \citep{saad2011numerical} to have a complexity in the order of $O(K\rho p^2 + K^2p)$. The remaining algorithm steps have a lower-order complexity.

Let $d = p\cdot \rho$ be the expected number of observed entries in each row of $\hat{D}$. In summary, the expected complexity for the whole procedure is
\begin{equation}\label{eq:AverageComplexity}
O\left(\rho p^2\log(1/\rho)+p^2\rho n+K\rho p^2 + K^2p\right) = O\left(p\cdot \max\left\{dK, d\log{p}, dn, K^2 \right\}\right).
\end{equation}

Regarding the order of $\rho$, a natural requirement is that we should at least observe observations for each row (or column) of the full matrix $\hat{D}$; otherwise, there is no chance of recovering corresponding information about the missing variable. Therefore, we cannot use an arbitrarily small $\rho$. It is known \citep{erds1960evolution} that we should ensure $d  > 2\log{p}$ (or equivalently, $\rho > 2\log{p}/p$) to make sure that we have observations for each row (or column) from $\hat{D}$, with high probability.  This is always what we assume. A few other natural constraints also exist: $K$ should always be smaller than $2n$; For a reasonable differential recovery, we consider only the cases when $\log{p} \ll n$. Our recommended setup $\rho \approx n/p$ in the previous section satisfies all the requirements. Combining these constraints together, the complexity \eqref{eq:AverageComplexity} is simplified to
\begin{equation}\label{eq:AverageComplexity-final}
O\left(pn^2\right).
\end{equation}
The whole procedure also stores $O(np)$ numbers in memory, with the help of sparse matrix data structure. In comparison, the naive method requires $O(p^2n)$ computational complexity with $O(p^2)$ for memory.  The CSS procedure thus results in a saving factor of  $n/p$ in both timing and memory with in its sequential implementation, which is huge in high-dimensional settings. The timing advantage is even larger with parallel computing.


\section{Simulation}\label{sec:sim}
In this section, we use simulated data to evaluate the proposed method for differential analysis on both Pearson-type correlation and Spearman-type correlation. We also compare the proposed method with several recent methods for submatrix localization in both small-scale problems ($n_1=n_2=100, p=2000$) and large-scale problems ($n_1=n_2=100, p=40000$). In the first setting, calculating the full correlation matrices is still computationally feasible and all methods of submatrix localization can be used. We specifically include the \emph{spectral projection} method of \cite{cai2017computational}, which is based on a model property studied in \cite{cai2017computational}. We then embed this method into our subsampling strategy to extend its scalability. Moreover, we include the adaptive large average submatrix (LAS) algorithm proposed in  \cite{liu2019multiscale}, as an improved version of \cite{shabalin2009finding}, along with the Golden Section Search algorithm (from the same paper) as two other benchmarks. However, both methods require a full correlation matrix as input and are hence inapplicable to large-scale problems.

The data $x_i$ and $y_i$ ($i = 1, \cdots, 100$) are generated from multivariate normal distributions with zero mean and covariance matrices. The covariance structures are given by the so-called  spiked covariance model \citep{johnstone2009consistency}
\begin{equation}\label{eq:spikedmodel}
\Sigma_1 = I + v_1v_1^T, \text{~~~~and~~~~} \Sigma_2 = I + v_2v_2^T,
\end{equation}
where $v_1$ has its first $50$ entries generated from $N(1,0.2)$ and the rest are zeros; $v_2$ has its $51$st to $100$th entries generated from $N(1,0.2)$ and the rest as zeros. Therefore, the two models have different  correlations only on the first 100 coordinates. The spiked covariance model includes the differential correlation structure with the models in \cite{chen2016statistical,cai2017computational,liu2019multiscale} as special cases, but is more general because the differential components do not have to be a constant.

We will evaluate the screening (or, differentially correlated variable selection) accuracy of the true differential variables by their sensitivity and specificity. Sensitivity refers here to the proportion of differential variables that are retained, while specificity reflects the proportion of null variables that are filtered out. For our method and that of  \cite{cai2017computational}, we can generate the full ROC curve with respect to sensitivity and specificity by varying the number of selected variables which, as mentioned, represents a core advantage in large-scale analysis.  The adaptive LAS and Golden Section Search algorithms automatically produce a final separation of the data and therefore only result in a single point in the ROC plane given each instantiation of the data. The timing of computation is another important aspect we wish to evaluate given our focus on computationally feasible methods for large-scale problems. We implement our approach and the spectral projection algorithm in R. The adaptive LAS and Golden Section Search algorithms are based on the implementation provided in \cite{liu2019multiscale}, available for download at  \texttt{https://github.com/nozoeli/biclustering}. In all the configurations, we repeat the analysis for 50 independent instantiations. 

\subsection{Results for small-scale problems}

For the same problems with $n_1=n_2=100$ and $p=2000$, we test the full data versions of our approach, the spectral projection method, as well as the corresponding subsampled version with $\rho = 0.05, 0.1, 0.2$. We consider both Pearson's correlation and Spearman's correlation as the statistic for analysis. Table~\ref{tab:small-compare-rho} shows the average area under the ROC curve (AUC) of the 50 replications for all three values of $\rho$ and the full version ($\rho=1$). Findings in the table indicate that the methods’ performance is fairly robust to the three sampling proportions in this case. Moreover, no observable differences in performance emerge when using either Pearson’s or Spearman’s correlation as the metric.

\begin{table}[ht]
\centering
\caption{The AUC for three subsampling proportions from our method and the spectral projection of \cite{cai2017computational} in small-scale problems.}\label{tab:small-compare-rho}
\begin{tabular}{l|cccc|cccc}
  \hline
 &  \multicolumn{4}{c|}{Pearson} &  \multicolumn{4}{c}{Spearman}  \\ 
  \hline
$\rho$ & 0.05 & 0.1 & 0.2& 1 (full)&0.05 & 0.1 & 0.2 & 1 (full) \\ 
  \hline
Our method (spectral screening) & 0.955 & 0.996 & 0.991 & 0.998  & 0.951 & 0.995 & 0.990 & 0.998 \\ 
Spectral projection \citep{cai2017computational} & 0.746 & 0.752 & 0.751 & 0.754 & 0.744 & 0.750 & 0.753 & 0.754 \\ 
   \hline
\end{tabular}
\end{table}

\begin{figure}[H]
\centering
\begin{subfigure}[t]{0.48\textwidth}
\centering
\includegraphics[width=\textwidth]{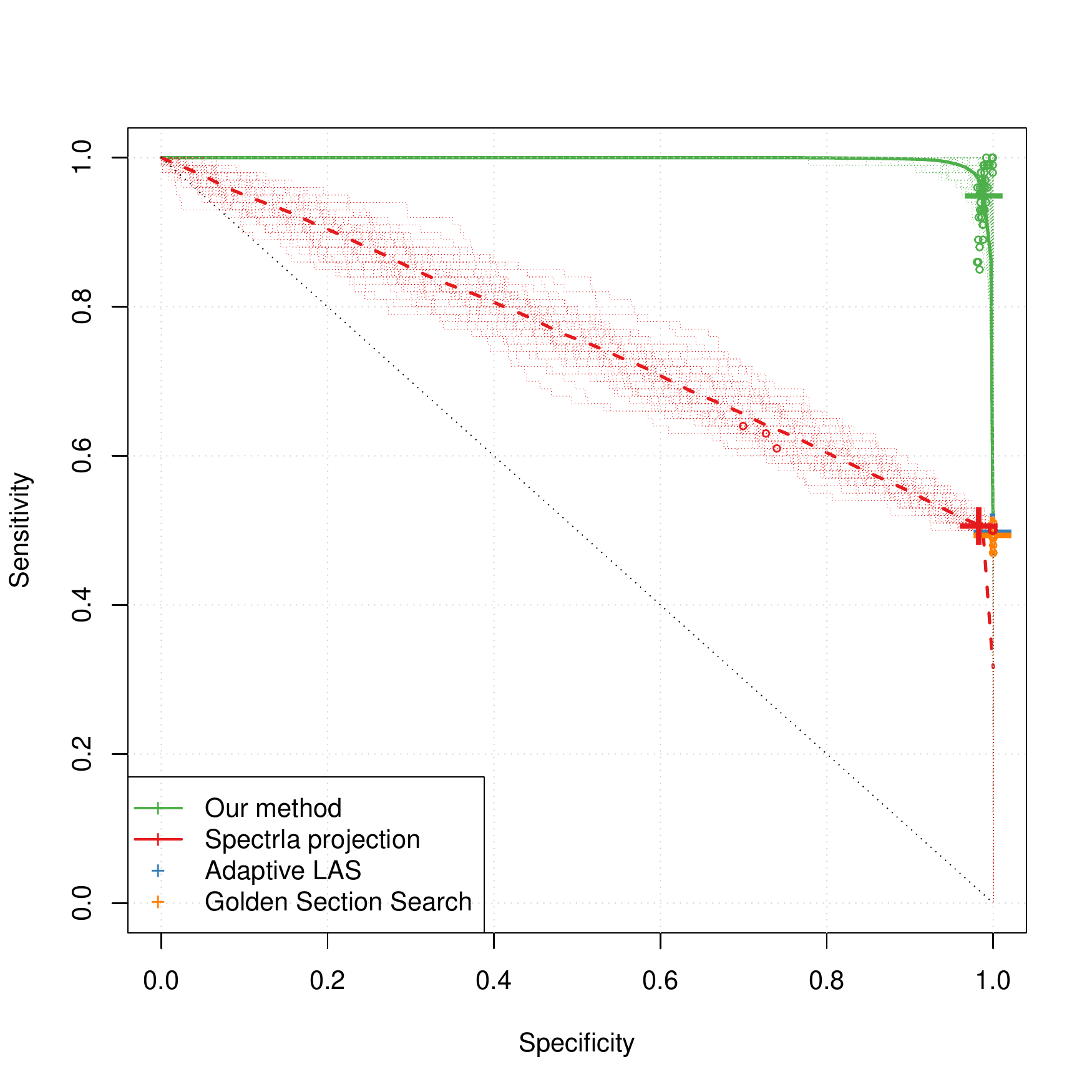}
\caption{Pearson's correlation}
\label{fig:small-Pearson}
\end{subfigure}
\hfill
\begin{subfigure}[t]{0.48\textwidth}
\centering
\includegraphics[width=\textwidth]{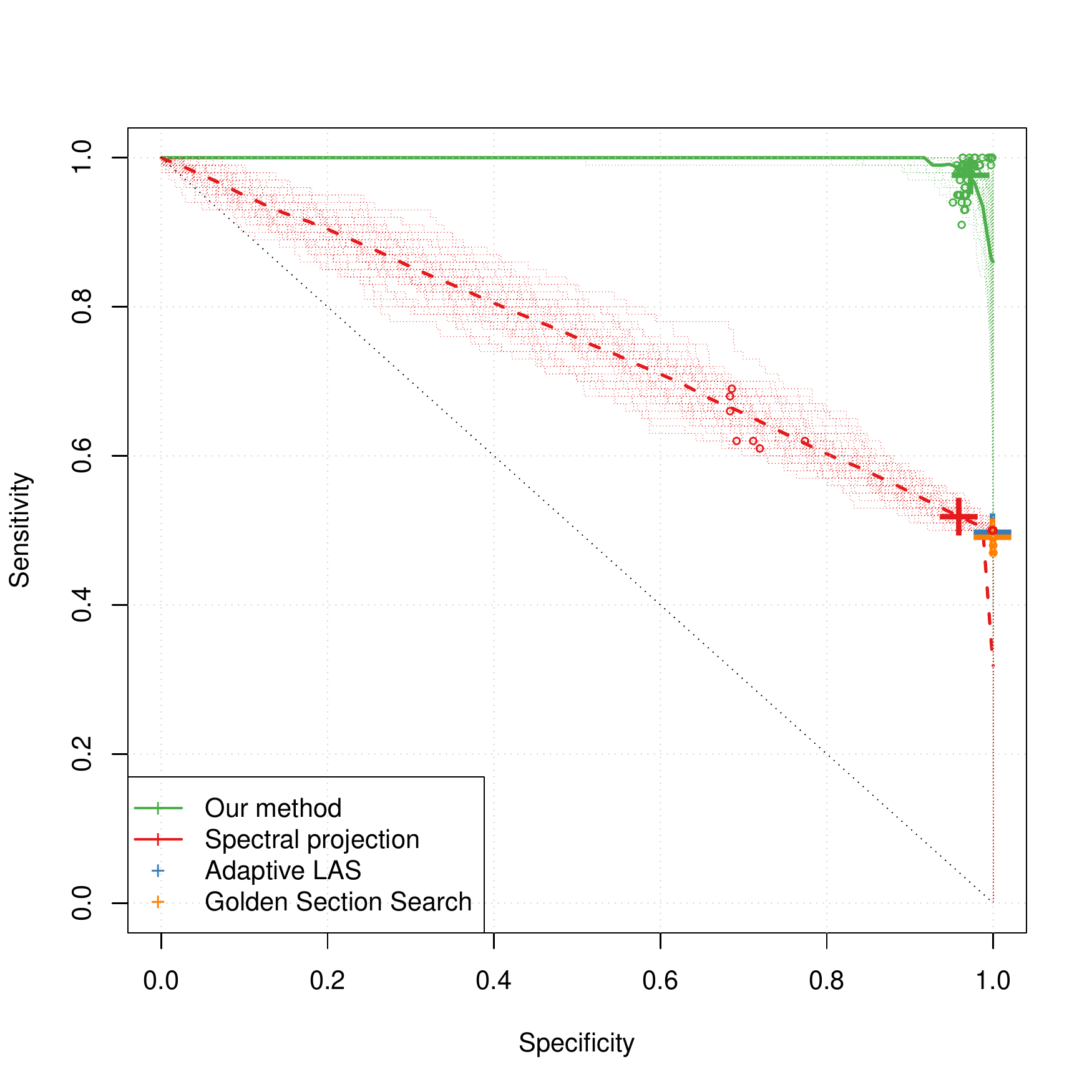}
\caption{Spearman's correlation}
\label{fig:small-Spearman}
\end{subfigure}
\caption{Sensitivity–specificity ROC plot for all four methods with small-scale problems  ($n_1=n_2=100, p=2000$). The ``+" signs are the average point of the single estimates of each method.}
\label{fig:small-ROC}
\end{figure}

Next, we use the variant of $\rho=0.1$ in our approach and the spectral projection method for comparison with the adaptive LAS and Golden Section Search algorithms based on the full correlation matrices. Figure~\ref{fig:small-ROC} depicts the average ROC curves of our method and the spectral projection, with 50 individual ROC curves in the background. The adaptive LAS and Golden Section Search algorithms can produce single selections but not ROC curves. 
Thus, we show 50 individual points on the background with their averaged sensitivity and specificity indicated by ``+". Spectral projection can also produce a single selection, based on the clustering properties indicated in \cite{cai2017computational}. The 50 single instantiations and their average appear in the figure as small background circles and a solid ``+". For our case, we use our bootstrap method to produce a single selection. The 50 individual selection points and the average are shown in the same way.

Both the adaptive LAS and Golden Section Search algorithms offer good specificity, but they tend to be overly conservative and do not allow for the flexibility of choosing a threshold. Spectral projection accommodates subsampling and produces a full ROC curve. Its performance is similar to the other methods in early stages; however, it suffers from poor tradeoffs between sensitivity and specificity if one wants to raise the power beyond the Golden Section Search. These disadvantages of the three methods may be due to the approaches’ restrictive constant assumption of the signal structure. By contrast, our method – even when using only 10\% of the data – affords us the flexibility necessary for full-range detection. Additionally, the ROC curve is more impressive than for the three other methods. The single-selection results (i.e., green dots) significantly outperform the other methods as well, conveying the effectiveness of this bootstrap method in identifying a good threshold.

Finally, we compare the computational time for each instantiation. Parameter tuning is also included in the timing. Findings are listed in Table~\ref{tab:small-timing}.  When the full differential matrix is used, spectral projection is most efficient. The computational difference between our method and spectral projection solely involves tuning, as we have an extra step to select the rank. Yet this difference is negligible; both methods demonstrate comparable timing performance. The adaptive LAS is much slower than the other methods. However, if we use our bootstrapping approach to locate the best model, it takes longer: an average of  64.88 sec. and  80.14 sec. for Pearson’s and Spearman’s correlations, respectively. However, given its effectiveness in model selection, the bootstrap remains worthwhile as long as the problem size is not overly large. For large-scale problems, we still need to use compressed spectral screening to reduce the dimensionality first, as is evaluated next. Also, for this small-scale problem, the full version may be faster than the subsampled version purely due to implementation efficiency; that is, the subsampling and correlation calculation are implemented in R, but the default full correlation matrix calculation is implemented in C (as default functions in R).

\begin{table}[ht]
\centering
\caption{Computational time (in sec.) of four methods averaged over 50 replications with small-scale problems  ($n_1=n_2=100, p=2000$).}\label{tab:small-timing}
\begin{tabular}{l|cccc|cccc}
  \hline
 &  \multicolumn{4}{c|}{Pearson} &  \multicolumn{4}{c}{Spearman}  \\ 
  \hline
$\rho$ & 0.05 & 0.1 & 0.2& full &0.05 & 0.1 & 0.2 & full \\ 
  \hline
Our method & 0.652 & 1.30 & 2.54& 0.98 & 0.77 & 1.50 & 2.93 & 0.96\\ 
\hline
Spectral projection  & 0.64 & 1.23 & 2.33 & 0.61 & 0.74 &1.43 & 2.69 & 0.72 \\ 
\hline
Adaptive LAS&  \multicolumn{4}{c|}{2.58} &  \multicolumn{4}{c}{2.95}\\ 
\hline
Golden Section Search & \multicolumn{4}{c|}{1.17}  & \multicolumn{4}{c}{1.36}  \\ 
   \hline
\end{tabular}
\end{table}

\subsection{Results for large-scale problems}

 In this section, we evaluate our method for problems with a the size that more commonly seen in gene differential analysis. Specifically, we now evaluate our method in the scenario of $n_1=n_2=100$ and $p = 40000$. The scale of this problem makes our subsampling strategy necessary. Therefore, we focus on our CSS method and spectral projection based on our subsampling technique.

\begin{figure}[H]
\vspace{-0.2cm}
\centering
\begin{subfigure}[t]{0.42\textwidth}
\centering
\includegraphics[width=\textwidth]{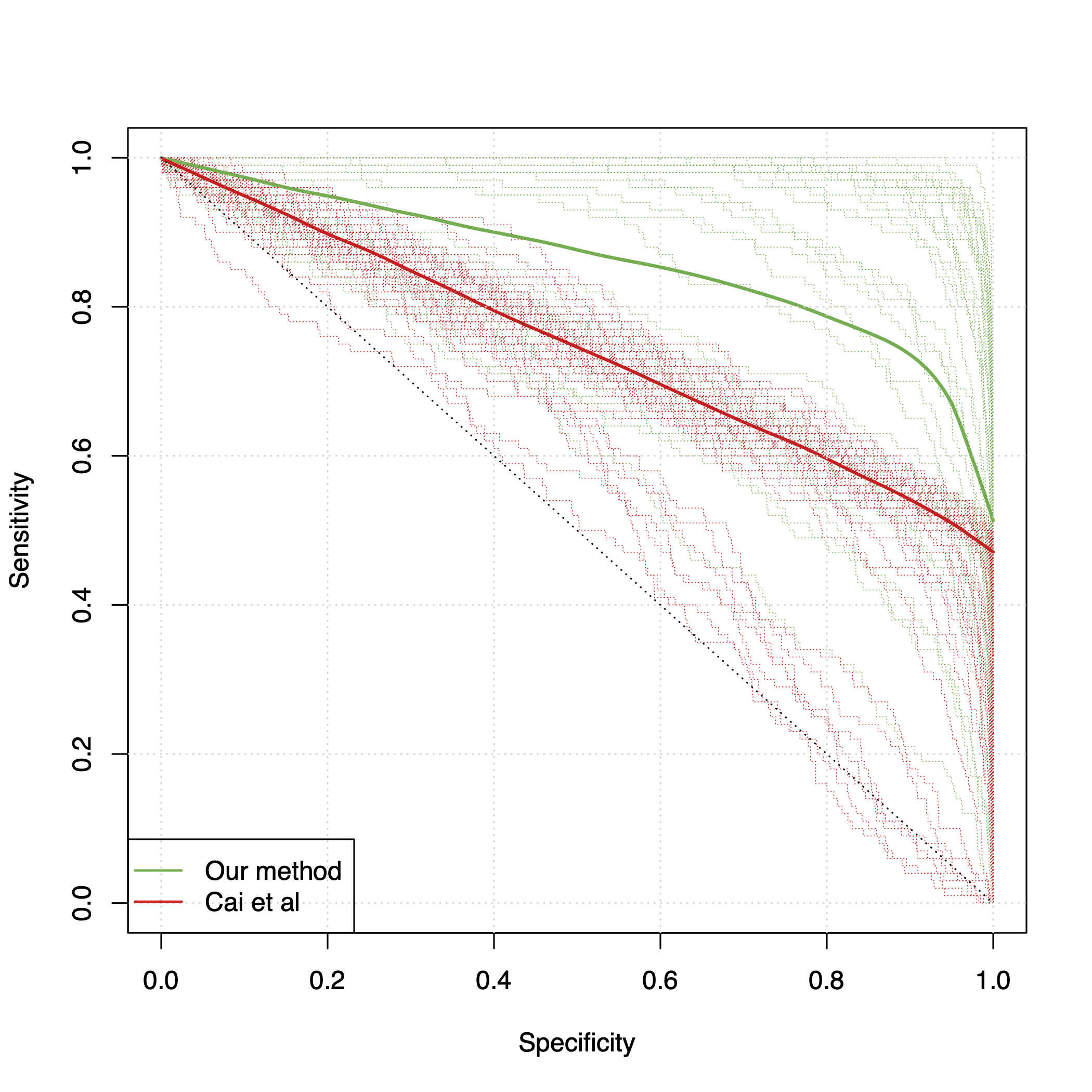}
\caption{Pearson's correlation with $\rho = 0.05$.}
\label{fig:large-Pearson-rho005}
\end{subfigure}
\hfill
\begin{subfigure}[t]{0.42\textwidth}
\centering
\includegraphics[width=\textwidth]{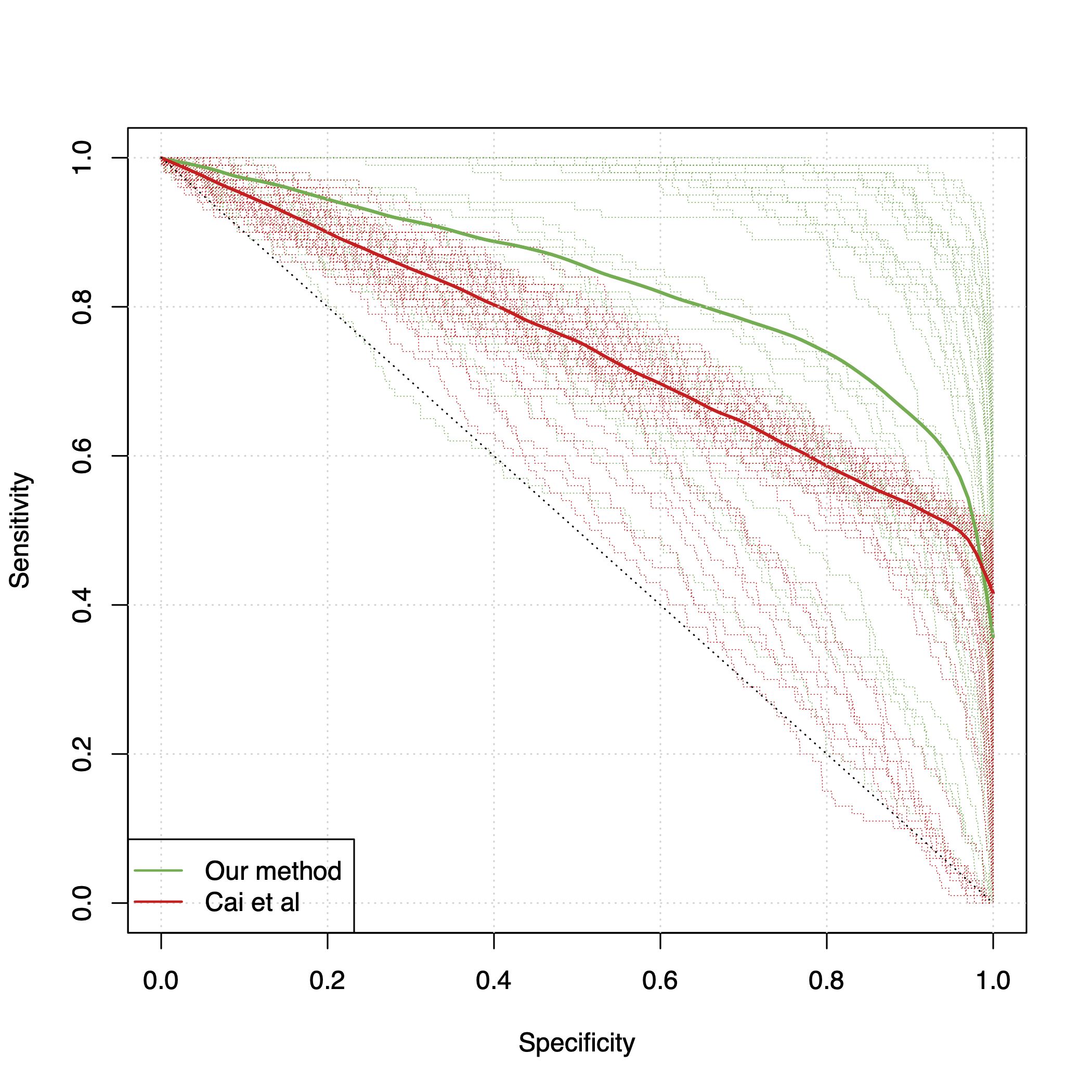}
\caption{Spearman's correlation with $\rho = 0.05$.}
\label{fig:large-Spearman-rho005}
\end{subfigure}

\begin{subfigure}[t]{0.42\textwidth}
\centering
\includegraphics[width=\textwidth]{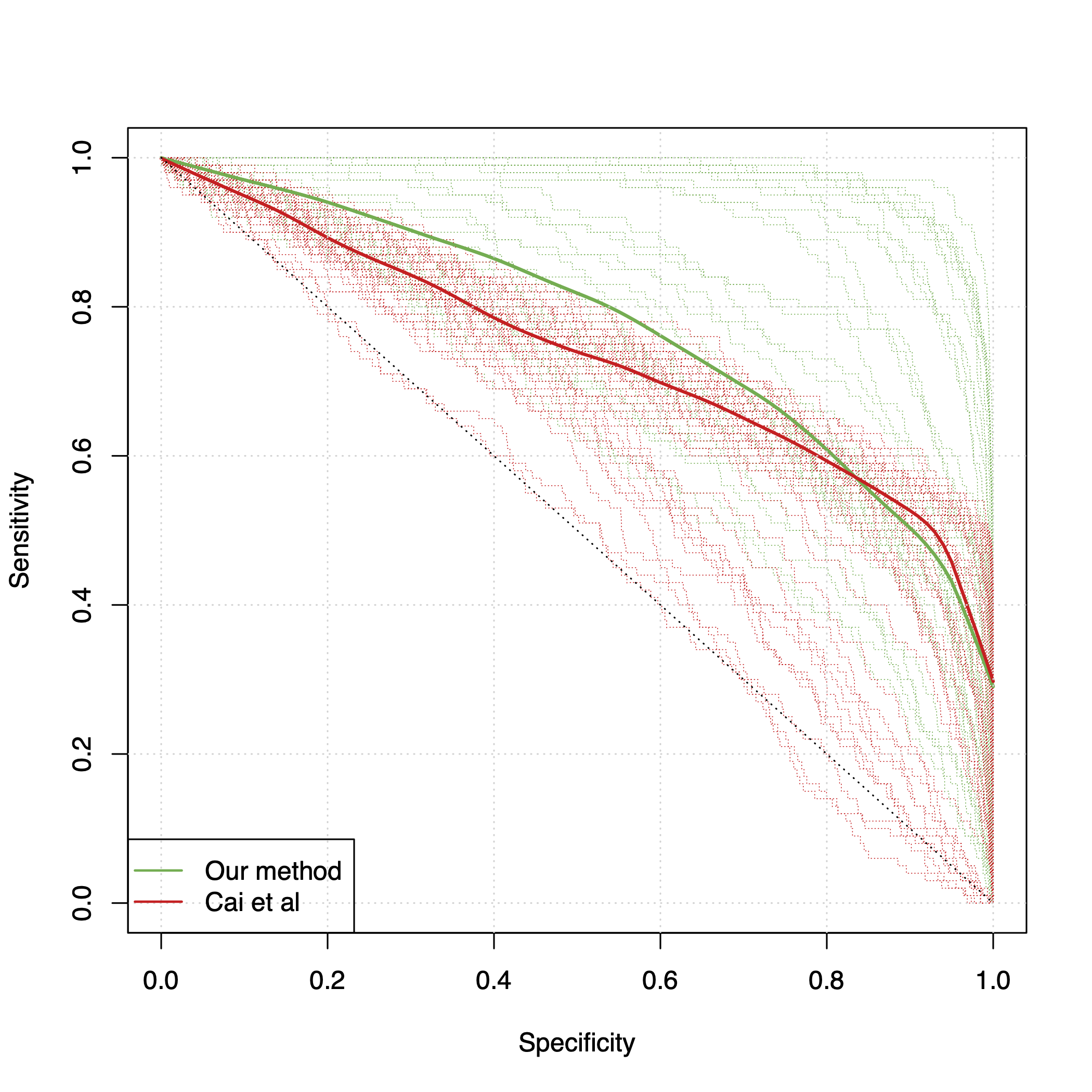}
\caption{Pearson's correlation with $\rho = 0.02$.}
\label{fig:large-Pearson-rho002}
\end{subfigure}
\hfill
\begin{subfigure}[t]{0.42\textwidth}
\centering
\includegraphics[width=\textwidth]{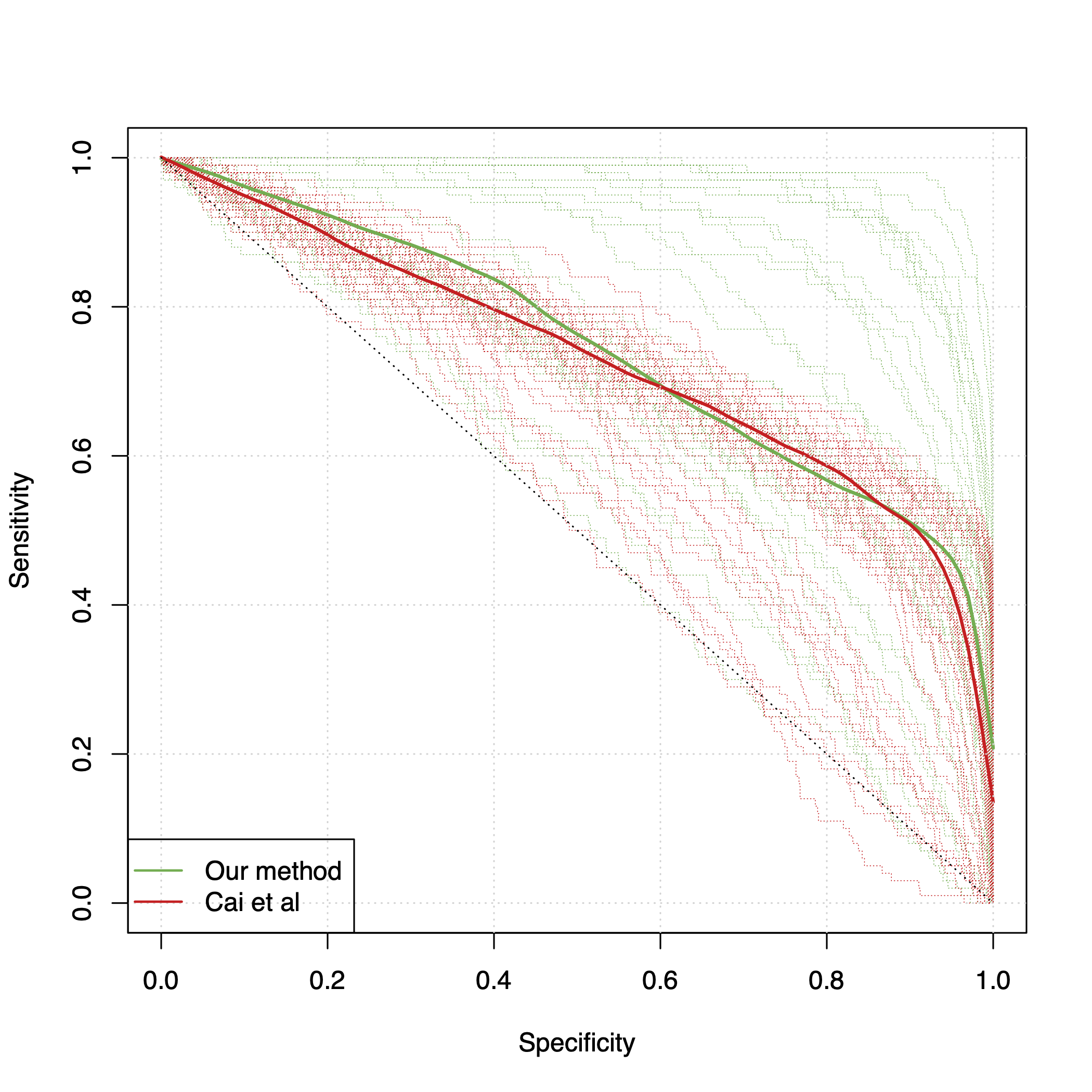}
\caption{Spearman's correlation with $\rho = 0.02$.}
\label{fig:large-Spearman-rho002}
\end{subfigure}

\begin{subfigure}[t]{0.42\textwidth}
\centering
\includegraphics[width=\textwidth]{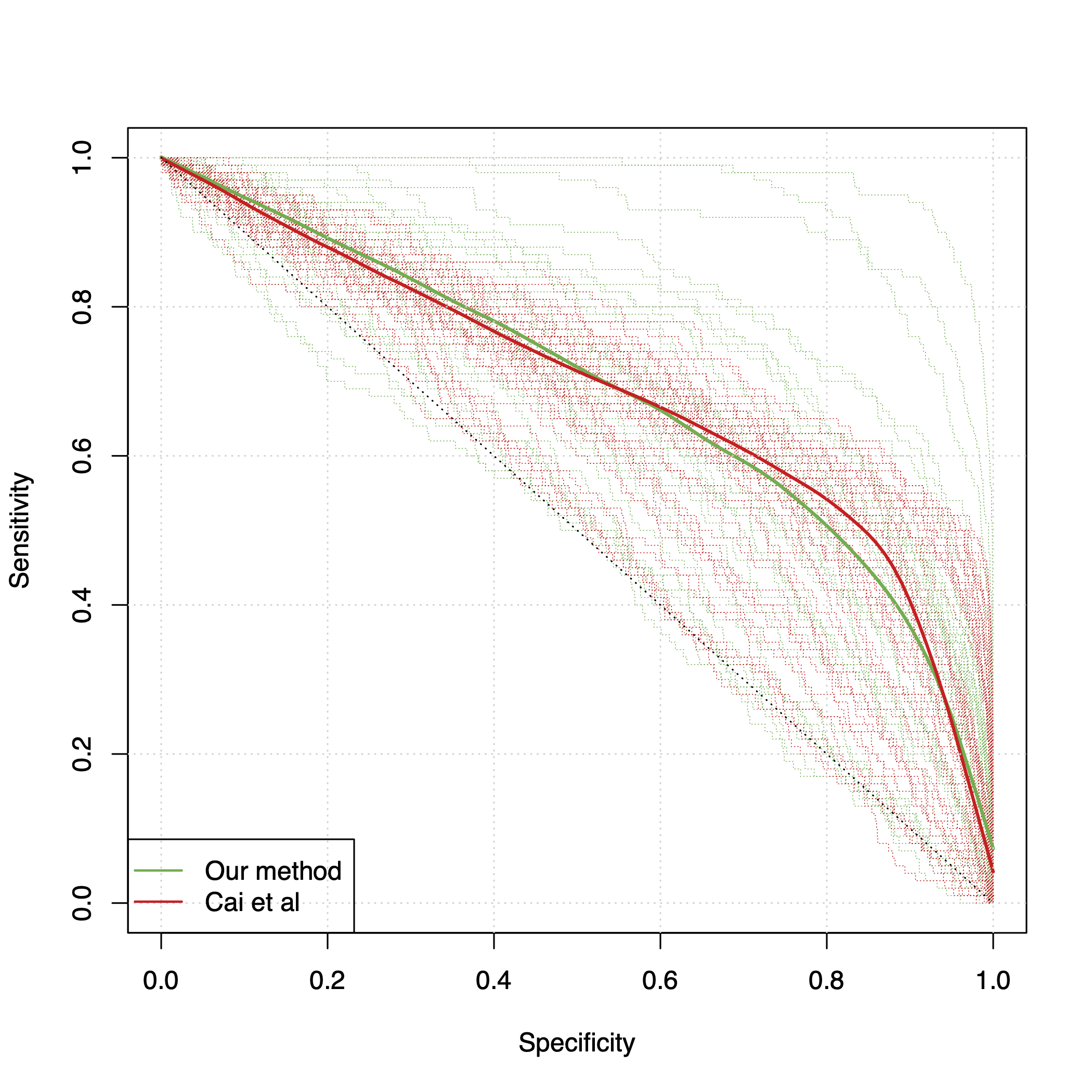}
\caption{Pearson's correlation with $\rho = 0.01$.}
\label{fig:large-Pearson-rho001}
\end{subfigure}
\hfill
\begin{subfigure}[t]{0.42\textwidth}
\centering
\includegraphics[width=\textwidth]{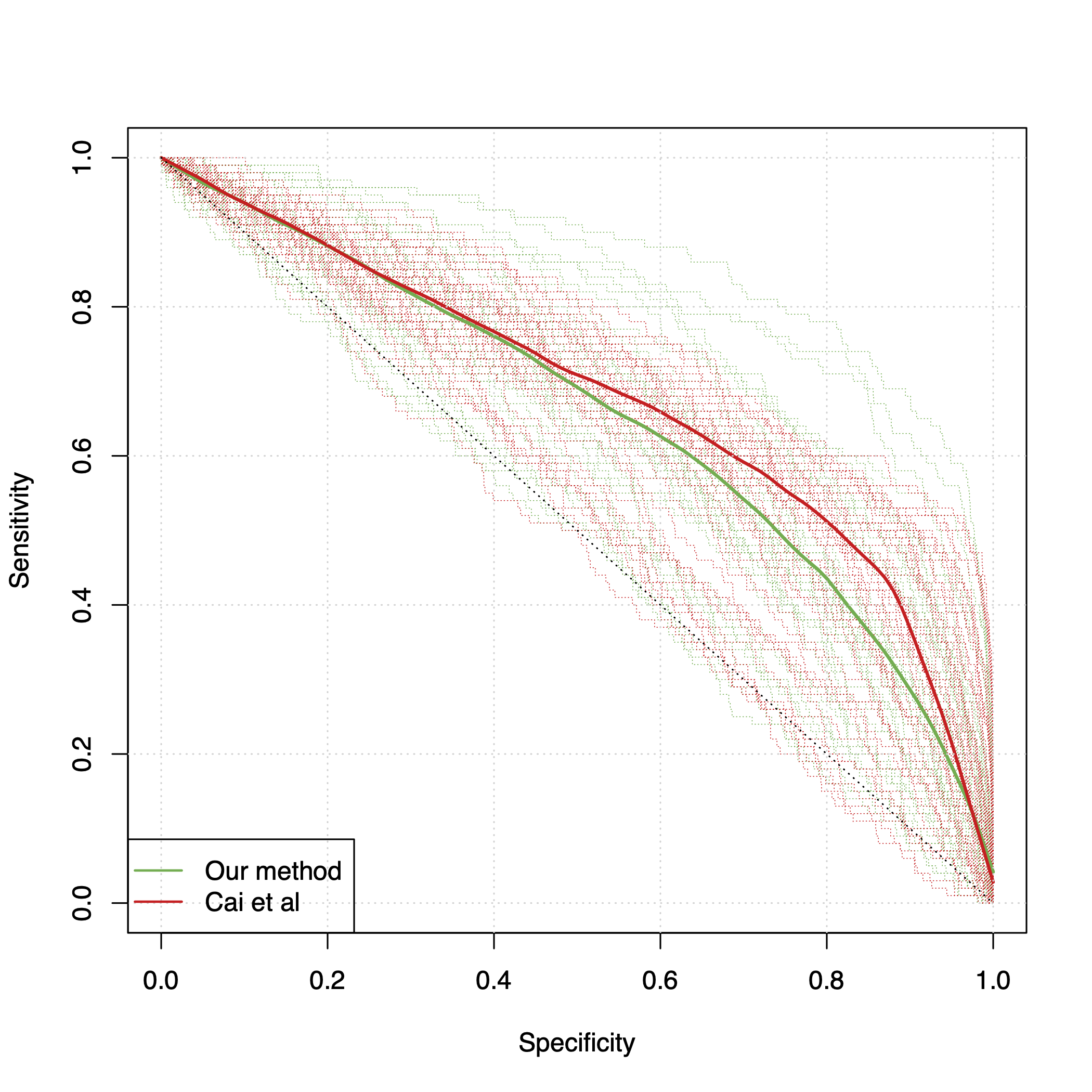}
\caption{Spearman's correlation with $\rho = 0.01$.}
\label{fig:large-Spearman-rho001}
\end{subfigure}
\caption{Sensitivity-specificity ROC plot for our method and the spectral projection method on subsampled correlation matrices in the large-scale problems ($n_1=n_2=100, p=40000$). }
\label{fig:large-ROC}
\end{figure}

Figure~\ref{fig:large-ROC} displays separate ROC curves for each configuration  $\rho=0.01, 0.02, 0.05$. When we use $5\%$ of the entries, the CSS performance is impressive --- much better than subsampled spectral projection. A nontrivial proportion of curves work nearly perfectly. When the sampling proportion drops to 2\%, the ROC curves become noisier. The advantage of the CSS over subsampled spectral projection degrades as well but still remains significant.  When $\rho$ declines further to $1\%$, the difference between the two methods becomes negligible. 

In practice, as addresed before, the choice of $\rho$ should depend on the computational capacity. A larger $\rho$ is always preferable. Our rule of thumb, $\rho > 2(n_1+n_2)/(p+1)$ gives $\rho > 0.01$ in this setting. This indicates having $\rho = 0.01$ is too low for an accurate estimation, matching our observation in Figure~\ref{fig:large-ROC}. Finally, to determine whether the effective performance at $\rho = 0.05$ is computationally feasible for the current problem, we include the timing results in Table~\ref{tab:large-timing}. The average computational time for our method with $\rho=0.05$ is about 438 seconds. The spectral projection is roughly 5\% faster but exhibits significantly lower accuracy (Figure~\ref{fig:large-Pearson-rho005} and \ref{fig:large-Spearman-rho005}). The configuration of $\rho=0.02$ is faster and gives reasonable accuracy as well. In our view, delivering such accurate screening results within 438 seconds is noteworthy given the size of the problem. The proposed method can therefore significantly expand the scalability of differential correlation analysis.

\begin{table}[ht]
\centering
\caption{The computational time (in sec.) of two methods based on subsampling averaged over the 50 replications in large-scale problems. }\label{tab:large-timing}
\begin{tabular}{l|ccc|ccc}
  \hline
 &  & Pearson & &  & Spearman &  \\ 
  \hline
$\rho$ & 0.01 & 0.02 & 0.05& 0.01 & 0.02 & 0.05 \\ 
  \hline
Our method & 83.038 & 158.152 & 438.684  & 74.474 & 142.945 & 390.495   \\ 
Spectral projection  & 81.689 & 153.878 & 419.787 & 73.266& 138.851 & 376.902  \\
   \hline
\end{tabular}
\end{table}

\section{Gene co-expression differential analysis for Glioblastoma}\label{sec:data-analysis}

Now we will demonstrate the compressed spectral screening method by applying it to analyze genes for glioblastoma. As mentioned, the original dataset contains 51,448 genes; 139 observations for the GBM group, 254 for the normal group, and 507 for the LGG group. After removing deficiently expressed genes (with a median expression level of less than 0.25) from the three groups, 23,296 genes remain for analysis. We focus on identifying a subset of genes with significant differences in their correlation matrices between the GBM patient group and the normal patient group (GTex). Our analysis is mainly based on the difference between the GBM group and the normal group, while the LGG group acts as the validation set.

 Due to potential skewness and the noisy nature of gene expression data, we apply logarithmic transformation and use Spearman’s correlation for our analysis. The sample is too large for direct correlation analysis. Therefore, we begin by applying compressed spectral screening to the 23,296 genes. In the tuning procedure, $K = 2$ is selected as the approximating rank. We select 2000 genes through this process to reduce the problem to a manageable size for all benchmark methods.

Next, we narrow our study to a smaller subset using the exact benchmark algorithms on the $2000\times 2000$ correlation matrices. Our spectral screening method effectively identifies a subgroup of genes. This pattern indicates effective detection of a subgroup of differential correlations with separably larger spectral scores, as shown in Figure~\ref{fig:SS-hist}. The other three benchmark methods fail to achieve a similar level of effectiveness for selection and downstream analysis (Appendix~\ref{sec:othermethods}). Based on the proposed bootstrap selection method, 254 genes appear to be differentially correlated.

Among the selected genes, the average correlation value is 0.628 for the GBM group, 0.41 for the LGG group, and 0.84 for the normal group. In addition to overall magnitude differences between the GBM and normal groups, we are interested in understanding the differential patterns in how genes are correlated. We analyze these data using network analysis on the differential networks, constructed by removing the differential effects of marginal magnitude. We next describe our analysis of the selected genes.

\begin{figure}[H]
\centering
\includegraphics[width=0.9\textwidth]{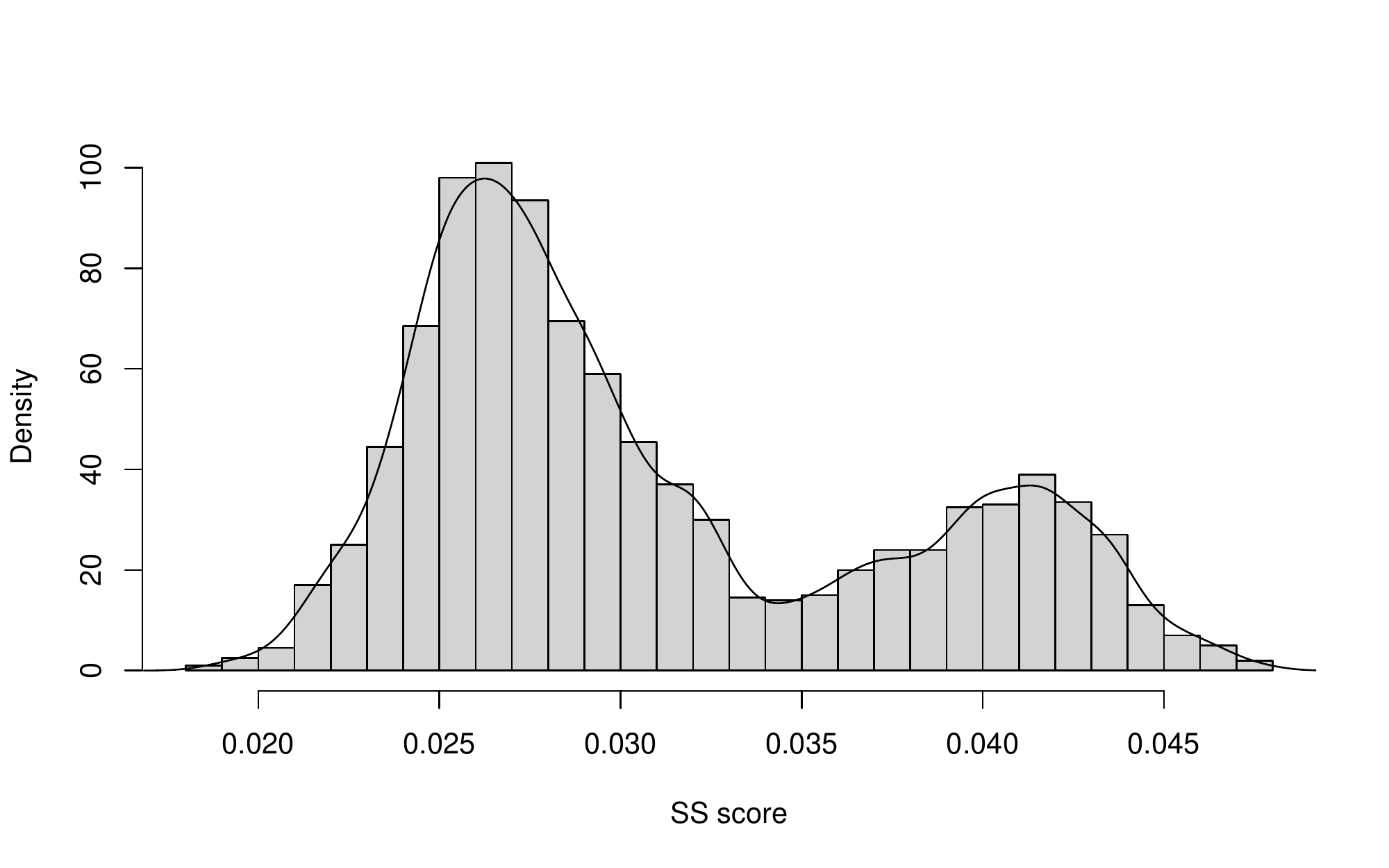}
\caption{The histogram and density of the spectral scores of 2000 genes by the spectral screening methods.}
\label{fig:SS-hist}
\end{figure}

Network analysis has been widely employed to evaluate biological and genetic relations \citep{gambardella2013differential}. One main advantage of using a network structure to represent correlations is that this data structure is potentially more robust and can remove magnitude effects. We first construct a network between genes according to Spearman’s correlation in the normal group by treating correlation entries larger than a threshold as an edge. The threshold is selected based on \cite{gambardella2013differential}'s strategy wherein the network degree best matches the power-law or scale-free distribution, which is generally believed to be a principled structure in biological networks \citep{barabasi2003scale,barabasi2009scale}. To ensure a meaningful comparison, we then select truncation thresholds for the GBM group and LGG group such that the resulting networks have the same density as the normal group. Finally, we select the largest connected component of the three networks, thereby obtaining three networks with 198 genes and an average degree of around 4.82. Magnitude effects have already been eliminated from these networks, and the structural differences between them only reflect relative differences in the networks’ connection patterns (Figure~\ref{fig:HCD-SS}). Next, we leverage network analysis methods to explore differential patterns between these networks.

 Given two networks on the same set of nodes, the networks can be represented by their adjacency matrices $A_1 \in \{0,1\}^{n\times n}$ and $A_2 \in \{0,1\}^{n\times n}$. Their differential adjacency can be defined as $A \in \{0,1\}^{n\times n}$ such that $A_{ij} = I(A_{1,ij}\ne A_{2,ij})$. The matrix $A$ The matrix A gives the dyad indices where the two networks differ. In particular, we take the differential adjacency matrix between the GBM network and the normal network. We want to summarize differential patterns based on multiple gene modules; that is, the differential patterns are similar for genes within the same module. We apply the hierarchical community detection method (HCD) of \cite{li2018hierarchical} is applied, which can automatically determine the module partition as well as the hierarchical relation between modules. Nodes’ module labels returned by the HCD algorithm are color-coded in  Figure~\ref{fig:HCD-SS}; the corresponding hierarchical relation between detected gene modules (communities) appears in Figure~\ref{fig:HCD-dendrogram}.

The modules and hierarchy provide informative distinctions regarding connection patterns. For example, in the normal group network, Modules 1 and 2 are densely intraconnected; both groups in the GBM network have sparse within-module connections. On the contrary, Module 3 is densely intraconnected in the GBM network but has few within-module connections in the normal network. Modules 1 and 2 thus exhibit different outgoing connection patterns. The connection between Modules 2 and 3 is much weaker in the GBM network than in the normal network, whereas this pattern does not hold for the connection between Modules 1 and 3. These distinct variation patterns consistently align with the hierarchical structure in Figure~\ref{fig:HCD-dendrogram}, indicating that Modules 1 and 2 may be treated as a meta-module with higher similarity compared with Module 3. Moreover, most notable changes in the connection pattern can be validated on the LGG network. To quantify this finding, in Table~\ref{tab:within-module}, we present the three networks’ within-module edge density. This density perfectly matches the normal-LGG-GBM order. As our analysis is performed solely based on the GBM and normal groups, this consistency offers compelling evidence for the effectiveness of our analysis.

\begin{figure}[H]
\centering
\includegraphics[width=\textwidth]{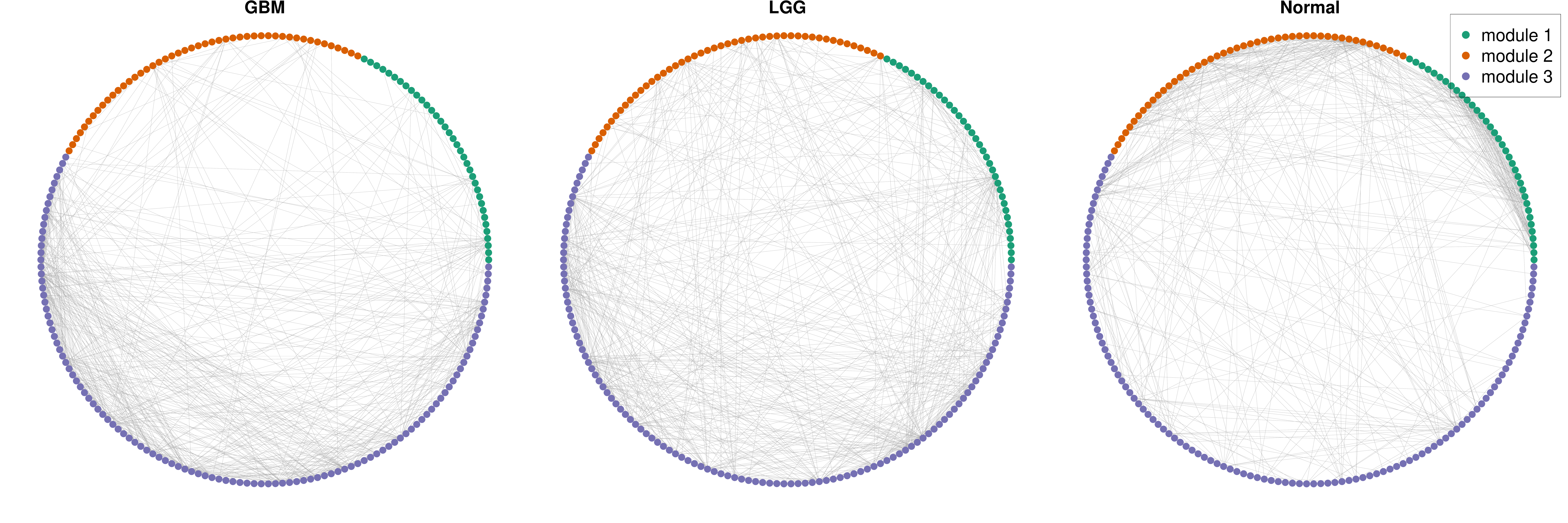}
\caption{Three networks of GBM, LGG, and normal groups and three gene modules identified by the HCD algorithm on the differential network between the GBM and normal groups.}
\label{fig:HCD-SS}
\end{figure}

\begin{figure}[H]
\centering
\vspace{-1cm}
\includegraphics[width=0.5\textwidth]{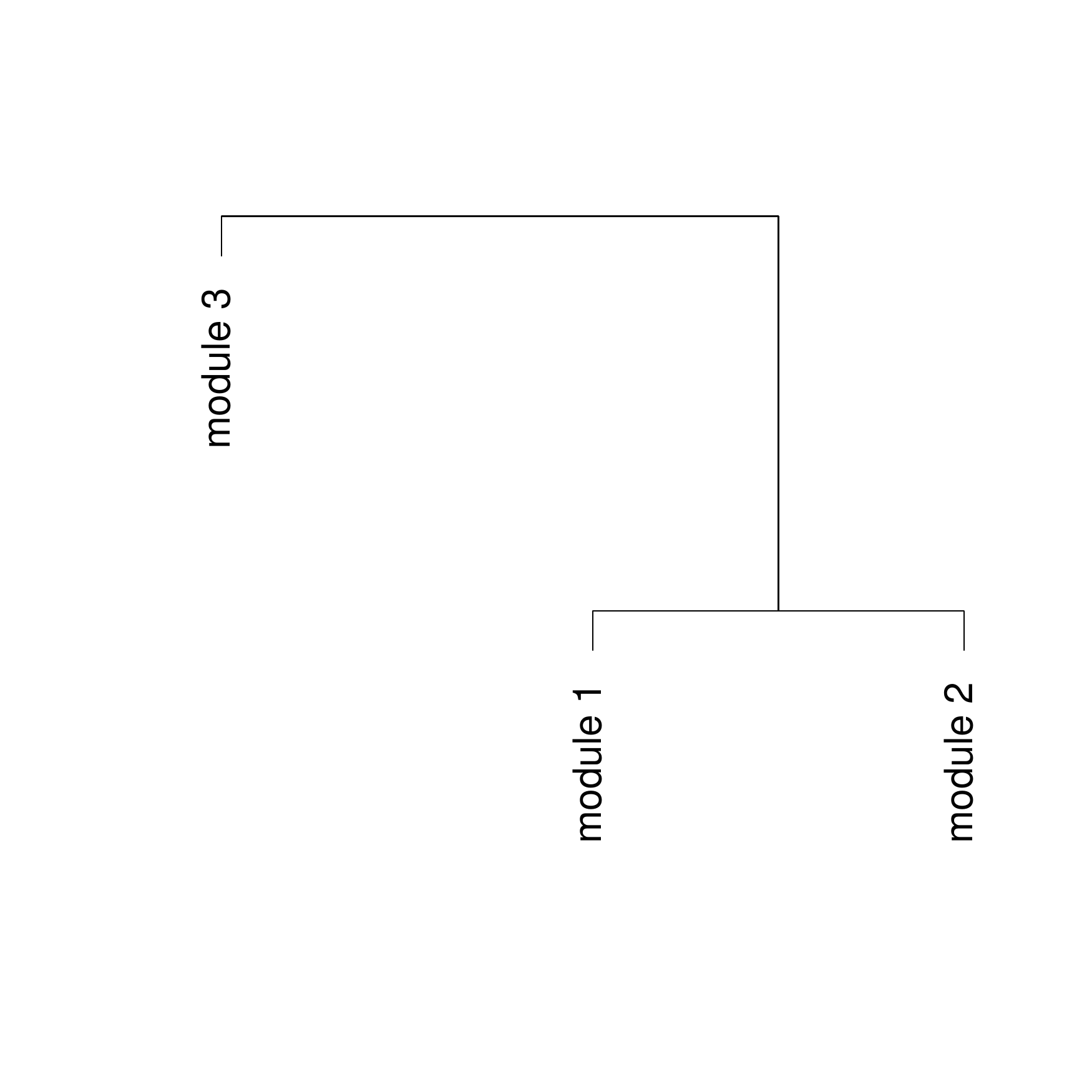}
\vspace{-1.5cm}
\caption{Dendrogram of hierarchical relation between 5 gene modules identified by the HCD algorithm on the differential network between the GBM and normal groups.}
\label{fig:HCD-dendrogram}
\end{figure}

\begin{table}[ht]
\centering
\caption{The within-module density of connections for the three networks.}
\label{tab:within-module}
\begin{tabular}{rrrr}
  \hline
 & GBM & LGG & Normal \\ 
  \hline
module 1 & 0.017 & 0.037 & 0.198 \\ 
 module 2 & 0.007 & 0.007 & 0.092 \\ 
 module 3 & 0.057 & 0.041 & 0.013 \\     \hline
\end{tabular}
\end{table}

The identified modules are also biologically meaningful. Specifically, we characterize each of the three gene modules via gene ontology analysis to identify their biological processes, molecular functions, and cellular components  \citep{mi2019panther}. The main results are summarized below.
\begin{itemize}
\item Module 1: The ontology analysis suggests that this module is enriched in genes exhibiting protein threonine kinase activity, protein serine kinase activity, and ATP binding function; and in genes coding for proteins with catalytic complex activity. Numerous genes coding for proteins with kinase activity have been shown to be overexpressed in gliomas; suppressing the activity of these overexpressed kinases is an area of active investigation \citep{el2021therapeutic,kai2020prognostic}. The low connectivity in GBM relative to normal samples is likely reflective of this dysregulation.
\item Module 2: : This module is enriched in genes involved in the regulation of chromatin organization, histone methylation, RNA localization, and transcription coregulator activity. Chromatin organization, histone methylation, and transcription factor activity have been found to differ substantially between glioma and normal samples \citep{bacolod2021mgmt,shi2020hotairm1,ciechomska2020histone,jia2020association}. Gene dysregulation in this module could contribute to aberrations in the processes that control gene expression, leading to the substantial differences in gene expression profiles seen between glioma and normal samples.
\item Module 3: Genes in this module are enriched for gene sets involving the regulation of RNA metabolic processing and nuclear cellular components. Naturally, nucleic acids and nuclear components are essential to the rapid cell division characteristic of Grade IV gliomas \citep{krupp2012rna}. It therefore seems logical that within-module connectivity is increased for the GBM group.
\end{itemize}
The detailed ontology enriched categories are presented in Appendix~\ref{sec:ontology}. 

\section{Theoretical properties}\label{sec:theory}

In this section, we present basic theoretical properties of the proposed method. We first study the  complete version of the spectral screening and show that the method can perfectly identify the differential variables under proper models. We call this a strong consistency property. In the second part, we study the compressed spectral clustering and show that at most a vanishing proportion of differential genes will be falsely removed in this case. We call this a weak consistency property. We start with introducing a few more notations for theoretical discussions. For a matrix $M$, we will use $M_{i\cdot}$ to denote its $i$th row and $M_{\cdot j}$ to denote its $j$th column. Define $\norm{M}_{\max} = \max_{ij}|M_{ij}|$, 
$\norm{M}_{\infty} = \max_{i}\norm{M_{i\cdot}}_1 =  \max_{i}\sum_j|M_{ij}|$, and $
\norm{M}_{2,\infty} = \max_{i}\norm{M_{i\cdot}}_2$.
For a positive semidefinite matrix $M$, define its stable rank as $\tau(M) = \tr(M)/\norm{M}$ and note that $\tau(M) \le \Rank(M)$.

\begin{ass}\label{ass:model}
 Assume $x_1, \cdots, x_{n_1}$ be i.i.d sub-gaussian random vectors with zero mean and covariance $\Sigma_1$ and $y_1, \cdots, y_{n_2}$ be i.i.d random vectors with zero mean and covariance $\Sigma_2$.  Moreover,  $\max(\norm{x_i}_{\psi_2},\norm{y_i}_{\psi_2}) \le \sigma^2$ for some constant $\sigma^2>0$ where $\norm{}_{\psi_2}$ is the Orlicz norm for sub-gaussian distributions. In particular, this is equivalent to the definition that for any $v \in \bR^{p}$ with $\norm{v}_2 = 1$,
 $$\p(|x_i^Tv|>t) \le 2e^{-t^2/\sigma^2}$$
 for any $t>0$, $1\le i \le n_1$. And the same property holds for each $y_i$ as well. 
\end{ass}

As mentioned, our main structural assumption is that $D=\Sigma_1 - \Sigma_2$ has zero entries except within a diagonal blocks. Without loss of generality, we take the first $m$ genes as differential variables. In additional, we need $D$ to be low-rank. We formulate these intuitions in the following assumption.

\begin{ass}\label{ass:difference}
Assume the rank $K$ matrix $D = \Sigma_1-\Sigma_2$ has the following eigen-decomposition $D= U\Lambda U^T$ such that $\Lambda = \diag(\lambda_1, \cdots, \lambda_K)$ and $U\in\bR^{p\times K}$ is an orthogonal matrix. It is also assumed that $|\lambda_1| \ge |\lambda_2| \ge \cdots, \ge |\lambda_K| >0$. Furthermore, assume $\norm{U_{i\cdot}}>0, i\le m$ and $\norm{U_{i\cdot}} = 0$ for $i>m$.
\end{ass}

In addition, we will need sufficient regularity for the problem to be solvable. A common property involved in spectral methods is the so-called \emph{incoherence assumption}, originally introduced by \cite{candes2009exact}. It is believed to be necessary to control entrywise errors in matrix recovery \citep{fan2018eigenvector,abbe2017entrywise,cape2019two}. In our situation, we would have $U$ to be zero in most rows, so the standard incoherence assumption is not meaningful. Instead, we constrain the assumption on only the nonzero rows corresponding to the differential variables with a slightly stronger requirement. Essentially, we are assuming the none of the differential variables has negligible magnitude of $U_{i\cdot}$ compared to others.

\begin{ass}[Constrained incoherence]\label{ass:incoherence}
Under the Assumption~\ref{ass:difference}, there exists a constant $\mu >1$ such that
$$\frac{1}{\mu} \frac{K}{m} \le \min_{1\le i \le m} \norm{U_{i\cdot}}^2 \le  \max_{1\le i \le m} \norm{U_{i\cdot}}^2= \norm{U}_{2,\infty}^2 \le \mu\frac{K}{m}.$$
For notational simplicity, we define $\Delta = \min_{1\le i \le m} \norm{U_{i\cdot}}/4$.
\end{ass}

The next theorem shows that the spectral screening on the complete data set is guaranteed to separate the differential variables from the rest according to the spectral scores. 

\begin{thm}[Strong consistency of the complete spectral screening]\label{thm:strong-consistency}
Under Assumptions~\ref{ass:model}, \ref{ass:difference} and \ref{ass:incoherence}, let $\hat{U}$ be the vector of the top $r$ leading eigenvectors of the matrix $\hat{D} = \hat{\Sigma}_1-\hat{\Sigma}_2$, where $\hat{\Sigma}_1$ and $\hat{\Sigma}_2$ be the sample covariance matrices of $\{x_i\}_{i=1}^{n_1}$ and $\{y_i\}_{i=1}^{n_2}$, respectively. For some constant $c>0$, there exists a constant $C$ depending only on $K, c$ and $\mu$, such that if  for sufficiently large $n$
\begin{equation}\label{eq:signal-bound}
C\sigma^2\frac{p}{|\lambda_K|}\sqrt{\frac{\log{p}}{n}} < 1
\end{equation}
we have
$$\min_{1\le i\le m}\norm{\hat{U}_{i\cdot}}_2 > \max_{i > m}\norm{\hat{U}_{i\cdot}}_2 + 2\Delta$$
with probability at least $1-2n^{-c}$, where $n = \min(n_1, n_2)$.
\end{thm}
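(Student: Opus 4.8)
The statement is an eigenvector (subspace) perturbation result, and the natural route is a row-wise, i.e. $\ell_{2,\infty}$, version of the Davis--Kahan theorem. Writing $\hat\Lambda=\diag(\hat\lambda_1,\dots,\hat\lambda_K)$ for the $K$ largest-magnitude eigenvalues of $\hat D$ and taking $r=K$, the plan is to produce an orthogonal alignment matrix $O\in\bR^{K\times K}$ such that the uniform row bound $\norm{\hat U O - U}_{2,\infty}\le\Delta$ holds on a high-probability event. Granting this, the conclusion is immediate. Right multiplication by the orthogonal $O$ preserves row norms, so for a differential coordinate $i\le m$ we have $\norm{(UO)_{i\cdot}}=\norm{U_{i\cdot}}\ge 4\Delta$ by Assumption~\ref{ass:incoherence}, whence $\norm{\hat U_{i\cdot}}\ge 4\Delta-\Delta=3\Delta$; for a null coordinate $i>m$ we have $U_{i\cdot}=0$ by Assumption~\ref{ass:difference}, so $\norm{\hat U_{i\cdot}}=\norm{\hat U_{i\cdot}-(UO)_{i\cdot}}\le\Delta$. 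Combining, $\min_{i\le m}\norm{\hat U_{i\cdot}}\ge 3\Delta>\Delta+2\Delta\ge\max_{i>m}\norm{\hat U_{i\cdot}}+2\Delta$, which is exactly the claim.

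The first step controls the noise $\hat D-D$. Each entry $\hat D_{ij}-D_{ij}$ is a centered difference of sample second moments of sub-gaussian variables, hence a centered sub-exponential average; under Assumption~\ref{ass:model}, Bernstein's inequality gives $\p(|\hat D_{ij}-D_{ij}|>t)\le 2\exp(-c_0 n\min(t^2/\sigma^4,t/\sigma^2))$ with $n=\min(n_1,n_2)$. Choosing $t\asymp\sigma^2\sqrt{\log p/n}$ (which falls in the sub-gaussian regime since $\log p\ll n$) and taking a union bound over the $O(p^2)$ entries of the two sample covariance matrices yields, with probability at least $1-2n^{-c}$ (using $p\ge n$ to convert the $p$-polynomial tail into an $n$-polynomial one),
\begin{equation}
\norm{\hat D - D}_{\max}\le c_1\sigma^2\sqrt{\frac{\log p}{n}}.
\end{equation}
On this event I bound the spectral norm crudely by the maximum absolute row sum, $\norm{\hat D - D}\le p\,\norm{\hat D-D}_{\max}$; condition~\eqref{eq:signal-bound}, with $C$ absorbing $c_1$ and the constants below, then forces $\norm{\hat D - D}\le|\lambda_K|/2$, so Weyl's inequality gives $|\hat\lambda_K|\ge|\lambda_K|/2$ and an eigengap large enough to invoke Davis--Kahan. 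The factor $p$ in~\eqref{eq:signal-bound} is precisely this row-sum relaxation.

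It remains to establish $\norm{\hat U O - U}_{2,\infty}\le\Delta$. For the null rows $i>m$ there is a clean self-consistency argument: the eigen-identity $\hat D\hat U=\hat U\hat\Lambda$ gives $\hat U=\hat D\hat U\hat\Lambda^{-1}$, and since $D_{i\cdot}=0$ we have $\hat D_{i\cdot}=(\hat D-D)_{i\cdot}$, so
\begin{equation}
\norm{\hat U_{i\cdot}}=\norm{(\hat D-D)_{i\cdot}\hat U\hat\Lambda^{-1}}\le\frac{\norm{(\hat D-D)_{i\cdot}}_2}{|\hat\lambda_K|}\le\frac{2\sqrt p\,\norm{\hat D-D}_{\max}}{|\lambda_K|},
\end{equation}
using $\norm{\hat U}=1$ and $\norm{\hat\Lambda^{-1}}=1/|\hat\lambda_K|$; with $\Delta\ge\tfrac14\sqrt{K/(\mu m)}$ and $m\le p$, condition~\eqref{eq:signal-bound} makes this at most $\Delta$. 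For the differential rows $i\le m$ I would invoke the row-wise eigenvector expansion of \cite{abbe2017entrywise} (see also \cite{fan2018eigenvector,cape2019two}), whose leading term is $(\hat D-D)U\Lambda^{-1}$: its $i$th row norm is at most $\sqrt m\,\norm{\hat D-D}_{\max}/|\lambda_K|$ because $U$ is supported on $[m]$ with $\norm{U}=1$, while the dominant higher-order term is of order $(\norm{\hat D-D}/|\lambda_K|)\norm{U}_{2,\infty}\le(\norm{\hat D-D}/|\lambda_K|)\sqrt{\mu K/m}$. Here the incoherence of Assumption~\ref{ass:incoherence} is essential: the factor $\norm{U}_{2,\infty}=O(\sqrt{K/m})$ matches the scale of the threshold $\Delta=O(\sqrt{K/m})$, so the two cancel and the requirement collapses to $\norm{\hat D-D}/|\lambda_K|\lesssim 1/\mu$, i.e. to~\eqref{eq:signal-bound}.

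The main obstacle is this last row-wise bound. A global Davis--Kahan estimate only controls $\norm{\hat U O-U}$ in spectral or Frobenius norm, giving $O(\norm{\hat D-D}/|\lambda_K|)$, which can be of constant order and is hopelessly weak against $\Delta$ once $m$ is large; one genuinely needs the sharper $\ell_{2,\infty}$ expansion carrying the incoherence factor. The difficulty is that $\hat D-D$ is a \emph{difference of two sample covariance matrices}, so its entries are not independent and each row is correlated with the estimated $\hat U$. Verifying the row-concentration hypotheses of the cited entrywise theorems --- or, equivalently, running a self-contained leave-one-out argument that decouples row $i$ of $\hat D-D$ from $\hat U$ --- is the technical heart of the proof; everything else is concentration and bookkeeping.
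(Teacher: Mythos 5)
Your overall skeleton matches the paper's: entrywise concentration of $\hat D - D$ (your Bernstein-plus-union-bound step is exactly what the paper imports as Lemma 1 of \cite{ravikumar2011high}), a two-to-infinity eigenvector bound up to an orthogonal alignment $O$, the incoherence assumption to match $\norm{U}_{2,\infty}$ against $\Delta$, and the same final row-norm comparison. Your self-consistency bound for the null rows is also sound. But the step you explicitly leave open --- the $\ell_{2,\infty}$ bound for the differential rows $i \le m$ --- is the whole theorem, and the route you sketch for it (verifying the row-concentration hypotheses of the probabilistic entrywise expansions of \cite{abbe2017entrywise,fan2018eigenvector}, or running a leave-one-out decoupling) is both harder than necessary and, as you yourself note, genuinely problematic here because $\hat D - D$ is a dependent difference of sample covariance matrices.

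The missing idea is that no probabilistic eigenvector expansion is needed at all. The paper invokes Theorem 4.2 of \cite{cape2019two} (restated as Lemma~\ref{lem:cape} in the appendix), which is a purely \emph{deterministic} perturbation bound: if $|\lambda_K| \ge 4\norm{E}_{\infty}$, then there is an orthogonal $O$ with $\norm{\hat U - UO}_{2,\infty} \le 14 \frac{\norm{E}_{\infty}}{|\lambda_K|}\norm{U}_{2,\infty}$. Its only hypothesis is an eigengap condition against the maximum-absolute-row-sum norm, so the dependence between the rows of $\hat D - D$ and $\hat U$ is irrelevant; there are no stochastic hypotheses to verify. The crude relaxation $\norm{\hat D - D}_{\infty} \le p\,\norm{\hat D - D}_{\max}$ --- the same factor-$p$ loss you already accept in your Weyl step, and the reason \eqref{eq:signal-bound} carries a factor $p$ rather than $\sqrt{m}$ --- feeds the entrywise concentration directly into this lemma and yields $\norm{\hat U - UO}_{2,\infty} \lesssim \sigma^2 \mu \frac{p}{|\lambda_K|}\sqrt{\log p / n}\,\Delta < \Delta$ under \eqref{eq:signal-bound}, uniformly over all rows (differential and null) at once. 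In short, the paper trades sharpness in the signal condition for a perturbation lemma with no randomness requirements, which is exactly what dissolves the obstacle you identified as the technical heart; without this (or an equivalent deterministic substitute), your argument for $i \le m$ does not go through as written.
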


Next, we introduce the theoretical guarantee for the compressed spectral screening, based on randomly subsampling the entries of $\hat{D}$. Intuitively, since we use only a subset of the data, we may not be able to achieve exactly the same level of accuracy as the complete data set. However, since we are using a randomly sampled subset, our result should still be correct in the average sense. This intuition can be formally justified by the following weak consistency of the compressed spectral screening. We will state this property in the number of differential variables that are not among the top $m$ ones with the largest spectral scores.
\begin{defi}
Let $\tilde{U}$ be the top $r$ eigenvector matrix of $\tilde{D}$ in Algorithm~\ref{algo:css}. Let $\eta_{\tilde{U}}(i)$ be the rank of $\norm{\tilde{U}_{i\cdot}}$ among all the $p$ rows in decreasing order.  Define the confusion count of the spectral screening to be
$$q_{\tilde{U}} = |\{i: 1 \le i \le m, \eta_{\tilde{U}}(i) > m\}|.$$
\end{defi}

\begin{thm}[Weak consistency of the compressed spectral screening]\label{thm:weak-consistency}
Under Assumptions~\ref{ass:model}, \ref{ass:difference} and \ref{ass:incoherence}, assume $\rho \ge \log{p}/p$. For any some $c>0$, there exists a constant $C_1$ depending on $c, \mu$ and $K$, such that  for a sufficiently large $n$
if 
$$\tau(\Sigma_1)\log{p} < n_1, \tau(\Sigma_2)\log{p} < n_2$$
and
\begin{equation}\label{eq:condition2}
|\lambda_K| > C_1\left(  \sqrt{\frac{\tau(\Sigma_1)\log{p}+\log{n_1}}{n_1}}\norm{\Sigma_1} + \sqrt{\frac{\tau(\Sigma_2)\log{p}+\log{n_2}}{n_2}}\norm{\Sigma_2}\right),
\end{equation}
we have
$$q_{\tilde{U}}  \le C\sigma^4\frac{p(n_1+n_2)}{\rho K |\lambda_K|^2}m$$
with probability at least $1-4n^{-c}-p^{-c}$, where $n=\min(n_1,n_2)$.
\end{thm}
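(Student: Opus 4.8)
The plan is to reduce the combinatorial confusion count to a single eigenvector perturbation bound, and then to control that perturbation by splitting the total error of $\tilde{D}$ into a statistical part and a subsampling part. Let $O$ be the orthogonal rotation aligning $\tilde{U}$ with $U$. The first step is a purely deterministic \emph{counting lemma},
\[
q_{\tilde{U}} \le \frac{\norm{\tilde{U}-UO}_F^2}{4\Delta^2}.
\]
The key point is that the screening ranks the row norms $\norm{\tilde{U}_{i\cdot}}$, and these are compared against the rotation-invariant reference $\norm{(UO)_{i\cdot}} = \norm{U_{i\cdot}}$, which equals $0$ for $i>m$ and is at least $4\Delta$ for $i\le m$ by Assumption~\ref{ass:difference}. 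If $q_{\tilde{U}}=q$ signal rows are demoted below rank $m$, then exactly $q$ null rows are promoted into the top $m$; comparing the promoted null rows against the demoted signal rows and splitting on whether a demoted signal row has norm above or below the threshold $2\Delta$ forces at least $q$ rows to each carry squared perturbation at least $4\Delta^2$. The constrained incoherence Assumption~\ref{ass:incoherence} then gives $\Delta^2 \ge \tfrac{1}{16\mu}\tfrac{K}{m}$, so that $q_{\tilde{U}} \le \tfrac{4\mu m}{K}\norm{\tilde{U}-UO}_F^2$.

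Next I would invoke the Davis--Kahan theorem in its spectral-to-Frobenius form, $\norm{\tilde{U}-UO}_F \lesssim \tfrac{\sqrt{K}}{|\lambda_K|}\norm{\tilde{D}-D}$, which is legitimate because $D$ is exactly rank $K$ with smallest nonzero eigenvalue of magnitude $|\lambda_K|$. This collapses the bound to $q_{\tilde{U}} \lesssim \tfrac{m}{|\lambda_K|^2}\norm{\tilde{D}-D}^2$, so it remains to show $\norm{\tilde{D}-D} \lesssim \sigma^2\sqrt{p(n_1+n_2)/(\rho K)}$. I would split $\tilde{D}-D = (\hat{D}-D) + (\tilde{D}-\hat{D})$. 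For the statistical part, $\norm{\hat{D}-D} \le \norm{\hat{\Sigma}_1-\Sigma_1} + \norm{\hat{\Sigma}_2-\Sigma_2}$, and I would use operator-norm concentration of sub-gaussian sample covariances (Koltchinskii--Lounici type), whose leading term scales like $\norm{\Sigma}\sqrt{(\tau(\Sigma)\log p + \log n)/n}$; this is exactly the quantity controlled by the signal condition \eqref{eq:condition2} together with $\tau(\Sigma)\log p < n$, so that this part is dominated by $|\lambda_K|$.

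The main obstacle is the subsampling part $\tilde{D}-\hat{D}$, whose $(i,j)$ entry is $(\Omega_{ij}/\rho - 1)\hat{D}_{ij}$: conditional on the data these are independent and mean-zero, with per-entry variance at most $\hat{D}_{ij}^2/\rho$ and magnitude at most $\norm{\hat{D}}_{\max}/\rho$. I would apply a matrix Bernstein inequality to this sparse symmetric random matrix, yielding $\norm{\tilde{D}-\hat{D}} \lesssim \sqrt{\norm{\hat{D}}_{2,\infty}^2\log p/\rho} + \norm{\hat{D}}_{\max}\log p/\rho$, where the assumption $\rho \ge \log p/p$ is precisely what keeps the second (Bernstein tail) term from dominating and what makes the sparse-regime concentration valid. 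The remaining work is to supply high-probability bounds $\norm{\hat{D}}_{\max} \lesssim \sigma^2$ and a bound on the row-norm quantity $\norm{\hat{D}}_{2,\infty}^2$ by entrywise sub-gaussian concentration, and to verify that assembling these yields the claimed operator-norm rate. The delicate points are the correct conditioning that decouples the data randomness from the Bernoulli mask, and the bookkeeping of the matrix-concentration constants. Collecting the failure probabilities of the two covariance concentrations and the sampling concentration produces the stated $1-4n^{-c}-p^{-c}$, and combining the three displays closes the proof.
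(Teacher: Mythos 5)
Your deterministic counting lemma is correct (and is a cleaner reference than the paper's, since the population rows $\norm{U_{i\cdot}}$ are exactly $0$ off the signal set), and your matrix--Bernstein control of the mask error $\tilde{D}-\hat{D}$ would also work: conditional on the data it gives $\norm{\tilde{D}-\hat{D}}\lesssim \sigma^2\sqrt{p\log p/\rho}+\sigma^2\log p/\rho$, which is within the target rate because $\log p< n\le n_1+n_2$ and $\rho p\ge \log p$. The genuine gap is in the middle step, where you apply Davis--Kahan between $\tilde{D}$ and the \emph{population} matrix $D$. This places the full deviation $\norm{\tilde{D}-D}$, and in particular the statistical error $\norm{\hat{D}-D}$, in the numerator. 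But the theorem's hypotheses give no control of $\norm{\hat{D}-D}$ at the sampling-error scale: condition \eqref{eq:condition2} together with the operator-norm concentration you cite yields only $\norm{\hat{D}-D}\le \norm{\hat{\Sigma}_1-\Sigma_1}+\norm{\hat{\Sigma}_2-\Sigma_2}\lesssim |\lambda_K|/C_1$, and nothing in the assumptions implies $\norm{\hat{D}-D}\lesssim \sigma^2\sqrt{p(n_1+n_2)/\rho}$. These are two incomparable scales. In the regime where \eqref{eq:condition2} is tight --- statistical error of order $|\lambda_K|/C_1$ while $|\lambda_K|$ itself is far above $\sigma^2\sqrt{p(n_1+n_2)/(\rho K)}$, which is exactly the regime where the theorem's conclusion is nontrivial --- your assembled chain gives only $q_{\tilde{U}}\lesssim \frac{m}{|\lambda_K|^2}\bigl(|\lambda_K|/C_1\bigr)^2\asymp m/C_1^2$, i.e.\ the trivial statement that a constant fraction of differential variables may be lost, not the claimed $o(m)$-type bound. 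So your intermediate goal ``$\norm{\tilde{D}-D}\lesssim\sigma^2\sqrt{p(n_1+n_2)/(\rho K)}$'' is not derivable from the stated hypotheses, and the proof does not close.

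The paper avoids this with a two-stage comparison that keeps the statistical error out of every numerator. It applies Davis--Kahan between $\tilde{U}$ and $\hat{U}$, the eigenvectors of the \emph{fully observed} $\hat{D}$: the numerator is then the pure sampling error $\norm{\tilde{D}-\hat{D}}$, bounded via Lemma~\ref{lemma:li} using $\Rank(\hat{D})\le n_1+n_2$ (your Bernstein bound would serve equally well here), while the denominator is the empirical eigengap $|\hat{\lambda}_K-\hat{\lambda}_{r+1}|$, which Weyl's inequality plus \eqref{eq:condition2} lower-bounds by $|\lambda_K|/2$. The statistical error thus only ever needs to satisfy $\norm{\hat{D}-D}\le|\lambda_K|/2$ --- precisely what \eqref{eq:condition2} provides. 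The counting argument is then run with $\hat{U}\hat{O}$ as the reference instead of $UO$: the needed $2\Delta$ separation of the rows of $\hat{U}$ comes from the $\ell_{2,\infty}$ analysis in the proof of Theorem~\ref{thm:strong-consistency} (note the paper implicitly reuses condition \eqref{eq:signal-bound} at that point). If you want to salvage your population-reference counting lemma, you would still have to route the perturbation through $\hat{U}$ and control $\tilde{U}$ versus $\hat{U}$ and $\hat{U}$ versus $U$ by these two \emph{different} mechanisms; adding the statistical and sampling errors inside a single Davis--Kahan application against $D$ cannot give the stated rate.
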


Notice that there is a gap between Theorem~\ref{thm:strong-consistency} and Theorem~\ref{thm:weak-consistency}, because even if $\rho=1$ in Theorem~\ref{thm:weak-consistency}, the result is still weaker than  Theorem~\ref{thm:strong-consistency}. We believe this is an artifact of our proving strategy and will leave the refinement for future work.

\medskip

We now discuss the indications of Theorem~\ref{thm:strong-consistency} and Theorem~\ref{thm:weak-consistency} in a simplified case where the results more interpretable. In particular, we consider a special case of the spiked covariance model \eqref{eq:spikedmodel}:
$$\Sigma_1 = \Sigma_2+\lambda_1 u_1u_1^T$$
with $n_1=n_2 = n$ and $u_1$ satisfying the incoherence Assumption \ref{ass:incoherence}. Assume that $\tau(\Sigma_2) \le n/\log{p}$ and $\max_{jj}\Sigma_{2,jj} = 1$. In this case, we can see that we can set $\sigma^2 = C\max(1, 1+|\lambda_1|\mu/m)$ following Assumption~\ref{ass:difference}. Therefore, Theorem~\ref{thm:strong-consistency} indicates that if
$$p\sqrt{\frac{\log p}{n}} \ll |\lambda_1| \le m/\mu$$
the strong consistency can be achieved.  For example, when $m = \kappa p$ for some constant $\kappa$, we require $\log p \ll n$ to have a nontrivial regime for $\lambda_1$. Similar, for the weak consistency, we need
$|\lambda_1|/m = O(1)$. Combining this requirement with the one for weak consistency, a necessary condition is
$$\sqrt{pn/\rho} = o(|\lambda_1|) = o(m) = o(p).$$
Therefore, we need to ensure $\rho \gg n/p$, which coincides with our rule of thumb about the lower bound of $\rho$ in Section~\ref{sec:tune}. Empirically, this selection give reasonable $\rho$ range according to our experiments.

\section{Discussion}\label{sec:discussion}
This paper proposes spectral screening algorithms to identify variables with differential patterns between two covariance/correlation matrices based on the need to assess a large-scale gene expression dataset related to glioblastoma. Our method assumes that the differences are constrained within a block of differential matrices but makes no additional assumptions about differential patterns. By leveraging spectral properties, we can incorporate random sampling to significantly boost the computational efficiency, such that this method can easily handle analyses of high-dimensional covariance matrices that are not even readable by computer memory. We have demonstrated the effectiveness of our spectral screening method in terms of variable selection accuracy and computational efficiency. A detailed differential co-expression study of TCGA and GTeX data demonstrates how the spectral screening method can clarify glioblastoma mechanisms. Our method is also well suited to a much broader range of applications, and its performance is theoretically guaranteed.

Of note, this spectral screening method can identify differential genes but cannot provide further insight into specific differential patterns. More in-depth studies of differential patterns require downstream analysis. In our glioblastoma example, we use a hierarchical community detection algorithm from \cite{li2018hierarchical} to extract precise differential patterns. The properties of this step are unknown with respect to differential correlation settings. As such, it would be useful to incorporate these two analytical steps into one systematic modeling procedure given that both are based on the spectral properties of data. The design and theoretical study of such a method presents an intriguing avenue for future work.

\section*{Acknowledgement}
T. Li is supported in part by the
NSF grant DMS-2015298 and the 3-Caveliers award from the University of Virginia. X. Tang is supported in part by the NSF grant DMS-2113467 and the EIM grant from  the University of Virginia.    A. Chatrath is supported by the NIH grant T32 GM007267.

\bibliography{CommonBib}{}
\bibliographystyle{abbrvnat}

\newpage

\begin{appendix}
\section{Proofs}

\begin{proof}[Proof of Theorem~\ref{thm:strong-consistency}]
Under Assumption~\ref{ass:model}, by Lemma 1 in \cite{ravikumar2011high}, we have
\begin{equation}\label{eq:event1}
\max\left( \norm{\hat{\Sigma}_1-\Sigma_1}_{\max}, \norm{\hat{\Sigma}_2-\Sigma_2}_{\max}\right) \le \sigma^2 \sqrt{\frac{2c\log{p}}{n}}
\end{equation}
with probability at least $1-2n^{-c}$. Under this event, by Lemma~\ref{lem:cape}, we have some orthogonal matrix $O \in \bR^{r\times r}$ such that
\begin{align}\label{eq:bound1}
\norm{\hat{U}-UO}_{2,\infty} &\le \frac{14}{|\lambda_K|}\norm{\hat{D}-D}_{\infty}\norm{U}_{2,\infty}. \notag\\
& \le \frac{14}{|\lambda_K|}\norm{U}_{2,\infty} \left( \norm{\hat{\Sigma}_1-\Sigma_1}_{\infty}+\norm{\hat{\Sigma}_2-\Sigma_2}_{\infty}\right) \notag\\
& \le \frac{14}{|\lambda_K|}\norm{U}_{2,\infty} p\left( \norm{\hat{\Sigma}_1-\Sigma_1}_{\max}+\norm{\hat{\Sigma}_2-\Sigma_2}_{max}\right) \notag \\
& \le 28\sqrt{2c}\sigma^2\frac{p}{|\lambda_K|}\sqrt{\frac{\log{p}}{n}}\norm{U}_{2,\infty}, 
\end{align}
as long as
\begin{equation}\label{eq:check1}
|\lambda_K| \ge 4\norm{\hat{D}-D}_{\infty} \ge 8\sqrt{2c}\sigma^2p\sqrt{\frac{\log{p}}{n}}.
\end{equation}
Notice that the zero pattern and the norms of rows do not change from $U$ to $UO$ for any orthogonal matrix $O$. Therefore, we will not distinguish $U$ and $UO$ for notational simplicity.  Using Assumption~\ref{ass:incoherence}, \eqref{eq:bound1} indicates
$$\norm{\hat{U}-U}_{2,\infty} \le 28\sqrt{2c}\sigma^2\frac{p}{|\lambda_K|}\sqrt{\frac{\log{p}}{n}}\mu \min_{1\le i\le m}\norm{U_{i\cdot}} = 112\mu\sqrt{2c}\sigma^2\frac{p}{|\lambda_K|}\sqrt{\frac{\log{p}}{n}}\Delta. $$

In particular, when $112\mu\sqrt{2c}\sigma^2\frac{p}{|\lambda_K|}\sqrt{\frac{\log{p}}{n}} < 1$, which indicates \eqref{eq:check1}, we have
$$|\norm{\hat{U}_{i\cdot}} - \norm{U_{i\cdot}}| \le \norm{\hat{U}_{i\cdot}-U_{i\cdot}} \le \norm{\hat{U}-U}_{2,\infty} <\Delta$$
because of \eqref{eq:signal-bound}. The claim of the theorem directly follows.
\end{proof}

\begin{proof}[Proof of Theorem~\ref{thm:weak-consistency}]
Let $\tilde{D} = \frac{1}{\rho} \hat{D}\circ \Omega$ where $\Omega$ is the matrix with $\Omega_{ij}$ being the sampled indicator. Let $\tilde{U}$ be the top $r$ eigenvectors of $\tilde{D}$, according to decreasing order (in magnitude) of its eigenvalues. By Davis-Kahan theorem \citep{yu2015useful}, there exits an orthogonal matrix $\hat{O} \in \bR^{r\times r}$ such that
\begin{equation}\label{eq:davis-kahan}
\norm{\tilde{U}-\hat{U}\hat{O}}_F^2 \le C\frac{\norm{\tilde{D}-\hat{D}}^2}{|\hat{\lambda}_K - \hat{\lambda}_{r+1}|^2}
\end{equation}
where $C$ is a constant depending on $r$. We now proceed to bound both the nominator and denominator. Firstly, notice that $\Omega$ can be seen as the adjacency matrix of an Erd\"{o}s-Renyi random graph with edge probability $\rho$. Because of the independence between $\Omega$ and our data, we can treat $\hat{D}$ as fixed when calculating the probability related to the randomness of $\Omega$. By Lemma~\ref{lemma:li} and Assumption~\ref{ass:model}, we have
\begin{align*}
\norm{\tilde{D}-\hat{D}}^2 &\le C'(\sqrt{\frac{p(n_1+n_2)}{\rho}}\norm{\hat{D}}_{\max})^2\\
& \le C'(\sqrt{\frac{p(n_1+n_2)}{\rho}}(\norm{D}_{\max}+\norm{\hat{D}-D}_{\max})^2\\
& \le C'\frac{p(n_1+n_2)}{\rho}(2\sigma^2 +\norm{\hat{D}-D}_{\max})^2
\end{align*}
with probability at least $1-p^{-c}$. Consider the joint event of the above and the event of \eqref{eq:event1}, and assume $\sqrt{\frac{2c\log{p}}{n}} < 1$, we have
\begin{equation}\label{eq:bound-nom}
\norm{\tilde{D}-\hat{D}}^2 \le C'\frac{p(n_1+n_2)}{\rho}(2\sigma^2 +2\sigma^2 \sqrt{\frac{2c\log{p}}{n}})^2\le 16C'\sigma^4\frac{p(n_1+n_2)}{\rho}
\end{equation}
with probability at least $1-2n^{-c}-p^{-c}$.

Secondly, by Weyl's theorem, we also have
$$|\hat{\lambda}_K - \hat{\lambda}_{r+1}| \ge |\lambda_K| - \norm{\hat{D}-D} \ge |\lambda_K|-\norm{\hat{\Sigma}_1-\Sigma_1} -\norm{\hat{\Sigma}_2-\Sigma_2}.$$
Standard spectral analysis of sub-gaussian random vectors (e.g. \cite{vershynin2018high}) indicates that
$$\norm{\hat{\Sigma}_1-\Sigma_1} \le C''\sqrt{\frac{\tau(\Sigma_1)\log{p}+\log{n_1}}{n_1}}\norm{\Sigma_1}$$
and
$$\norm{\hat{\Sigma}_2-\Sigma_2} \le C'' \sqrt{\frac{\tau(\Sigma_2)\log{p}+\log{n_2}}{n_2}}\norm{\Sigma_2}$$
with probability at least $1-2n^{-c}$, in which $\tau(\Sigma_1) = \tr(\Sigma_1)/\norm{\Sigma_1}$ is the stable rank. Combining the above results with \eqref{eq:davis-kahan} and if
\begin{equation}\label{eq:check2}
|\lambda_K| > 2C''\left(  \sqrt{\frac{\tau(\Sigma_1)\log{p}+\log{n_1}}{n_1}}\norm{\Sigma_1} + \sqrt{\frac{\tau(\Sigma_2)\log{p}+\log{n_2}}{n_2}}\norm{\Sigma_2}\right),
\end{equation}
we have
\begin{align}\label{eq:Frobenius-bound}
\norm{\tilde{U}-\hat{U}\hat{O}}_F^2 &\le C\frac{\norm{\tilde{D}-\hat{D}}^2}{|\hat{\lambda}_K - \hat{\lambda}_{r+1}|^2}\notag \\
&\le C16C'\sigma^4\frac{p(n_1+n_2)}{\rho}\frac{1}{( |\lambda_K|-\norm{\hat{\Sigma}_1-\Sigma_1} -\norm{\hat{\Sigma}_2-\Sigma_2})^2}\notag \\
&\le C16C'\sigma^4\frac{p(n_1+n_2)}{\rho}\frac{1}{( |\lambda_K|/2)^2}\notag \\
&\le C'''\sigma^4\frac{p(n_1+n_2)}{\rho|\lambda_K|^2}
\end{align}
with probability at least $1-4n^{-c}-p^{-c}$.
\medskip

Note that every differential variable $i \in [1,m]$ falling out of the first $m$ in $\norm{\tilde{U}_{i\cdot}}$ indicates one null variable enters the top $m$. Without lost of generality, assume the first $1\cdots q_{\tilde{U}}$ variables fall out of the top $m$ in the spectral scores. So we can match $q_{\tilde{U}}$ of these differential variables with $q_{\tilde{U}}$ null variables (this mapping is not unique). Denote this one-to-one mapping by $i \to t(i)$.  Under the current event, according to the proof of Theorem~\ref{thm:strong-consistency}, we have
 $$\norm{\hat{U}_{i\cdot}}_2 > \norm{\hat{U}_{t(i)\cdot}}_2 + 2\Delta. $$
 Therefore, we know that for each $i = 1, \cdots, q_{\tilde{U}}$, at least one of the following two
 $$|\norm{\hat{U}_{i\cdot}}_2- \norm{\tilde{U}_{i\cdot}}_2| > \Delta \text{~~and~~}|\norm{\hat{U}_{t(i)\cdot}}_2- \norm{\tilde{U}_{t(i)\cdot}}_2| > \Delta$$
  hold. Therefore, we have
 \begin{align*}
 \norm{\tilde{U}-\hat{U}\hat{O}}_F^2 & = \sum_{i} \norm{\tilde{U}_{i\cdot} - \hat{U}_{i\cdot}\hat{O}}_2^2 \ge \sum_{i}|\norm{\tilde{U}_{i\cdot}}_2-\norm{\hat{U}_{i\cdot}\hat{O}}_2|^ \ge \sum_{i}|\norm{\tilde{U}_{i\cdot}}_2 - \norm{\hat{U}_{i\cdot}}_2|^2 \ge  q_{\tilde{U}}\Delta^2.
 \end{align*}

Using \eqref{eq:Frobenius-bound} and Assumption~\ref{ass:incoherence}, we get
$$q_{\tilde{U}} \le \frac{\norm{\tilde{U}-\hat{U}\hat{O}}_F^2}{\frac{1}{16\mu} \frac{K}{m}} \le C\sigma^4\frac{p(n_1+n_2)}{\rho K |\lambda_K|^2}m$$
as claimed.
\end{proof}

\begin{lem}[Theorem 4.2 of \cite{cape2019two}]\label{lem:cape}
Let $X$ and $E$ be two $p\times p$ symmetric matrices and $\Rank(X) = r$. If the eigen-decomposition of $X$ is given by $X = U\Lambda U^T$ with $\Lambda$ including the eigenvalues in decreasing order (in magnitude). Further, let $\hat{U}$ be the corresponding top $r$ eigenvectors of $X+E$. If $|\Lambda_{rr}| \ge 4\norm{E}_{\infty}$, there exists an orthogonal matrix $O \in \bR^{r\times r}$ such that
$$\norm{\hat{U} - UO}_{2,\infty} \le 14\frac{\norm{E}_{\infty}}{|\Lambda_{rr}|}\norm{U}_{2,\infty}.$$
\end{lem}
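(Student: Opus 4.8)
The plan is to prove this purely deterministic bound by splitting $\hat U - UO$ into a piece lying in the column space of $U$ and a piece orthogonal to it, controlling each in the row-wise $\norm{\cdot}_{2,\infty}$ norm, and then resolving the self-reference that the orthogonal piece creates. First I would fix notation: write $\hat\Lambda = \hat U^T(X+E)\hat U$ for the diagonal of the top-$r$ eigenvalues of $X+E$ (ordered by magnitude), so that $(X+E)\hat U = \hat U\hat\Lambda$, and take $O$ to be the orthogonal polar factor of $W := U^T\hat U$. Since $\norm{E}\le\norm{E}_\infty$ for symmetric $E$ and $|\Lambda_{rr}|\ge 4\norm{E}_\infty$, Weyl's inequality gives $|\hat\lambda_r|\ge|\Lambda_{rr}|-\norm{E}\ge\tfrac34|\Lambda_{rr}|>0$, so $\hat\Lambda$ is invertible with $\norm{\hat\Lambda^{-1}}\le\tfrac{4}{3|\Lambda_{rr}|}$, the top-$r$ eigenspace is separated from the rest by a gap of at least $\tfrac12|\Lambda_{rr}|$, and $\hat U$ is well defined.

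The central identity comes from the rank deficiency of $X$. Because $X=U\Lambda U^T$ we have $(I-UU^T)X=0$, hence $(I-UU^T)(X+E)=(I-UU^T)E$; multiplying $(X+E)\hat U=\hat U\hat\Lambda$ on the left by $(I-UU^T)$ yields $(I-UU^T)\hat U=(I-UU^T)E\hat U\,\hat\Lambda^{-1}$. I would then use $\hat U-UO=(I-UU^T)\hat U+U(U^T\hat U-O)$ and bound the two pieces. The key lever is $\norm{EM}_{2,\infty}\le\norm{E}_\infty\norm{M}_{2,\infty}$, which holds because the $i$-th row of $EM$ is $\sum_k E_{ik}M_{k\cdot}$, together with $\norm{AB}_{2,\infty}\le\norm{A}_{2,\infty}\norm{B}$ for the spectral norm. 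Writing $(I-UU^T)E\hat U=E\hat U-UU^TE\hat U$ gives $\norm{(I-UU^T)E\hat U}_{2,\infty}\le\norm{E}_\infty\norm{\hat U}_{2,\infty}+\norm{U}_{2,\infty}\norm{E}$, so the orthogonal piece is at most $\tfrac{4}{3}\gamma(\norm{\hat U}_{2,\infty}+\norm{U}_{2,\infty})$ with $\gamma:=\norm{E}_\infty/|\Lambda_{rr}|\le\tfrac14$. For the in-space piece I would use $\norm{U(U^T\hat U-O)}_{2,\infty}\le\norm{U}_{2,\infty}\norm{W-O}$ with the standard fact $\norm{W-O}\le\norm{\sin\Theta}^2$ and a Davis–Kahan bound $\norm{\sin\Theta}\le\tfrac{4}{3}\norm{E}/|\Lambda_{rr}|$; since $\norm{E}/|\Lambda_{rr}|\le\tfrac14$, this contribution is genuinely higher order and small.

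The main obstacle — the feature that makes this more than a one-line application of Davis–Kahan — is that the bound on the orthogonal piece contains $\norm{\hat U}_{2,\infty}$ on the right-hand side, so the quantity being controlled appears on both sides. I would resolve this by the triangle inequality $\norm{\hat U}_{2,\infty}\le\norm{\hat U-UO}_{2,\infty}+\norm{U}_{2,\infty}$ (using that an orthogonal $O$ preserves row norms), substituting to get a linear inequality $\delta\le\tfrac{4}{3}\gamma(\delta+2\norm{U}_{2,\infty})+(\text{higher order})$ in $\delta:=\norm{\hat U-UO}_{2,\infty}$. Rearranging, the coefficient $1-\tfrac43\gamma\ge\tfrac23$ stays bounded away from zero precisely because of the hypothesis $|\Lambda_{rr}|\ge 4\norm{E}_\infty$, and solving yields $\delta\le C\gamma\norm{U}_{2,\infty}$ with an explicit constant comfortably below $14$; collecting the constants conservatively then gives the stated bound. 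The only real care needed is to ensure that every spectral quantity — $\norm{\hat\Lambda^{-1}}$, the eigengap, and the $\sin\Theta$ bound — is controlled by the single ratio $\gamma$, which the $4\norm{E}_\infty$ margin guarantees.
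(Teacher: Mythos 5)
The paper never proves Lemma~\ref{lem:cape}: it is imported verbatim as Theorem~4.2 of \cite{cape2019two}, so there is no internal proof to compare against, and the right question is whether your blind reconstruction is sound. It is, and it follows essentially the same route as the cited source: the split $\hat{U}-UO=(I-UU^T)\hat{U}+U(U^T\hat{U}-O)$, the identity $(I-UU^T)\hat{U}=(I-UU^T)E\hat{U}\hat{\Lambda}^{-1}$ coming from $\Rank(X)=r$, the row-wise bound $\norm{EM}_{2,\infty}\le\norm{E}_{\infty}\norm{M}_{2,\infty}$, and the self-consistent inequality in $\delta=\norm{\hat{U}-UO}_{2,\infty}$ are precisely the levers of the Cape--Tanner--Priebe argument. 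Your individual steps all check out: $\norm{E}\le\norm{E}_{\infty}$ holds for symmetric $E$ via $\norm{E}\le\sqrt{\norm{E}_1\norm{E}_{\infty}}$; since the ordering is by magnitude you need Weyl in singular-value form, which gives $|\hat{\lambda}_r|\ge\frac{3}{4}|\Lambda_{rr}|$ and $|\hat{\lambda}_{r+1}|\le\norm{E}\le\frac{1}{4}|\Lambda_{rr}|$, so $\hat{\Lambda}$ is invertible with $\norm{\hat{\Lambda}^{-1}}\le 4/(3|\Lambda_{rr}|)$ and the top-$r$ subspace is well defined; the Davis--Kahan step must be the interval-separation (indefinite) version --- complementary spectrum of $X+E$ inside $[-\norm{E},\norm{E}]$, the $U$-eigenvalues of $X$ outside $(-|\Lambda_{rr}|,|\Lambda_{rr}|)$ --- which indeed yields $\norm{\sin\Theta}\le\frac{4}{3}\norm{E}/|\Lambda_{rr}|\le\frac{1}{3}$; this also makes $W=U^T\hat{U}$ invertible, so the polar factor $O$ is genuinely orthogonal, and $\norm{W-O}=\max_i(1-\cos\theta_i)\le\norm{\sin\Theta}^2$ is valid because $\cos\theta_i\ge 0$ here. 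Tracking your constants confirms the claim with room to spare: with $\gamma=\norm{E}_{\infty}/|\Lambda_{rr}|\le\frac{1}{4}$ you get $\delta\le\frac{4}{3}\gamma\left(\delta+2\norm{U}_{2,\infty}\right)+\frac{16}{9}\gamma^2\norm{U}_{2,\infty}$, and since $1-\frac{4}{3}\gamma\ge\frac{2}{3}$ this solves to $\delta\le\left(4\gamma+\frac{8}{3}\gamma^2\right)\norm{U}_{2,\infty}\le\frac{14}{3}\gamma\norm{U}_{2,\infty}$, strictly sharper than the quoted constant $14$. In short: a correct, self-contained proof of an externally cited lemma, matching the original proof strategy rather than anything in this paper.
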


Let $G \in \{0,1\}^{n\times n}$ be an adjacency matrix of an Erd\"{o}s-Renyi random graph where all edges appear independently with probability $\rho$ and $Z\in \bR^{n\times n}$ be a symmetric matrix.  Let $Z\circ G$ be the Hadamard (element-wise) matrix product of the two matrices.
\begin{lem}[Lemma 2 of \cite{li2016network}]\label{lemma:li}
Let $G \in \{0,1\}^{p\times p}$ be a $p\times p$ adjacency matrix of an Erd\"{o}s-Renyi random graph, where upper triangular entries are generated by independent Bernoulli distribution with expectation $\rho$. If $\rho \ge C\log p/p$ for a constant $C$. Then for any $c > 0$ and for any fixed matrix $Z \in \bR^{p\times p}$ with $\Rank(Z) \le K$, the following relationship holds 
$$\left\| \frac{1}{\rho}Z \circ G - Z \right\|  \le 2C'\sqrt{\frac{pK}{\rho}}\norm{Z}_{\max} $$
with probability at least $1-p^{-c}$,  where $C' = C'(c, C)$.
\end{lem}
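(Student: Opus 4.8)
The plan is to split the statement into a purely \emph{deterministic} structural inequality and a standard \emph{probabilistic} spectral-norm bound for an Erdős–Rényi random matrix, then combine the two. First I would rewrite the target matrix entrywise. Since
$$\left(\tfrac{1}{\rho}Z\circ G - Z\right)_{ij} = Z_{ij}\left(\tfrac{G_{ij}}{\rho}-1\right),$$
we have $\tfrac{1}{\rho}Z\circ G - Z = Z\circ W$, where $W$ is the symmetric matrix with entries $W_{ij}=G_{ij}/\rho-1$. A direct computation gives $W = \tfrac{1}{\rho}(G-\e G)-I$, using $\e G = \rho(J-I)$ with $J$ the all-ones matrix and zero diagonal in $G$.

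The heart of the argument is the deterministic inequality
$$\norm{Z\circ W} \le \sqrt{K}\,\norm{Z}_{\max}\,\norm{W},$$
valid for \emph{any} symmetric $W$ whenever $\Rank(Z)\le K$. To establish it I would use the bilinear characterization $\norm{Z\circ W} = \sup_{\norm{x}_2=\norm{y}_2=1}|x^T(Z\circ W)y|$ together with the identity $x^T(Z\circ W)y = \langle W,(xy^T)\circ Z\rangle$ in the Frobenius inner product. The crucial structural observation is that the Hadamard product $(xy^T)\circ Z$ still has rank at most $K$: writing $Z=\sum_{l=1}^K \lambda_l u_l u_l^T$, one gets $(xy^T)\circ Z = \sum_{l=1}^K \lambda_l (x\circ u_l)(y\circ u_l)^T$, a sum of $K$ rank-one terms. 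Applying the duality between spectral and nuclear norms, then $\norm{\cdot}_* \le \sqrt{K}\norm{\cdot}_F$ for rank-$K$ matrices, and finally the elementary bound $\norm{(xy^T)\circ Z}_F^2 = \sum_{ij}x_i^2 y_j^2 Z_{ij}^2 \le \norm{Z}_{\max}^2$ for unit $x,y$, yields $|x^T(Z\circ W)y| \le \norm{W}\sqrt{K}\norm{Z}_{\max}$; taking the supremum closes the deterministic step.

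It then remains only to control $\norm{W}$ with high probability. Since $\norm{W}\le \tfrac{1}{\rho}\norm{G-\e G}+1$, I would invoke the classical concentration of the spectral norm of a centered Erdős–Rényi adjacency matrix (Feige–Ofek / Lei–Rinaldo / Bandeira–van Handel type): under the assumed density $\rho \ge C\log p/p$, one has $\norm{G-\e G} \le C_1\sqrt{\rho p}$ with probability at least $1-p^{-c}$. This gives $\norm{W}\le C_1\sqrt{p/\rho}+1 \le 2C_1\sqrt{p/\rho}$, and combining with the deterministic inequality on the same event yields $\norm{\tfrac{1}{\rho}Z\circ G-Z} \le 2C_1\sqrt{pK/\rho}\,\norm{Z}_{\max}$, as claimed, with $C'=C_1$.

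I expect the deterministic inequality to be the crux of the proof. The naive decomposition $\norm{Z\circ W}\le \sum_l|\lambda_l|\norm{u_l}_\infty^2\norm{W}$ obtained from the triangle inequality is far too lossy — it can overshoot the sharp bound by a factor of $\sqrt{K}$ (for instance when $Z$ is a rank-$K$ diagonal projection) — so the benefit of the low-rank hypothesis must be extracted through the Hadamard-rank identity and nuclear-norm duality rather than term by term. By contrast, the probabilistic ingredient is routine once the density condition $\rho \ge C\log p/p$ is imposed, as this is precisely the threshold guaranteeing spectral concentration of the random graph.
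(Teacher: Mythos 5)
Your proof is correct, but note that the paper contains no proof of this statement to compare against: the lemma is imported verbatim as Lemma~2 of \cite{li2016network}, stated in the appendix without argument. Your two-step route --- the deterministic Hadamard inequality $\norm{Z\circ W}\le \sqrt{K}\,\norm{Z}_{\max}\,\norm{W}$ obtained from $x^T(Z\circ W)y=\langle W,(xy^T)\circ Z\rangle$, the rank bound $\Rank\left((xy^T)\circ Z\right)\le K$, nuclear--spectral duality and $\norm{\cdot}_*\le\sqrt{K}\norm{\cdot}_F$, followed by the Feige--Ofek/Lei--Rinaldo concentration $\norm{G-\e G}\le C_1\sqrt{\rho p}$ valid once $\rho\ge C\log p/p$ --- is precisely the standard derivation behind the cited result, and every step checks out, including the absorption of the diagonal correction via $\norm{W}\le \rho^{-1}\norm{G-\e G}+1\le 2C_1\sqrt{p/\rho}$ (legitimate since $\rho\le 1\le p$). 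One small point: your eigendecomposition $Z=\sum_{l}\lambda_l u_lu_l^T$ tacitly assumes $Z$ symmetric, whereas the lemma is stated for ``any fixed matrix''; this is harmless here (the sentence preceding the lemma in the paper declares $Z$ symmetric, and it is applied to the symmetric $\hat D$), but for the literal statement you should either use the SVD $Z=\sum_l\sigma_l u_lv_l^T$ or, more cleanly, observe that $(xy^T)\circ Z=\diag(x)\,Z\,\diag(y)$, which gives the rank bound and the Frobenius estimate $\norm{\diag(x)\,Z\,\diag(y)}_F\le\norm{Z}_{\max}$ with no decomposition at all.
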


\newpage 
\section{Ontology results about hierarchical gene modules}\label{sec:ontology}
\begin{table}[H]
{\tiny
\caption{Gene ontology results for gene sets enriched with the genes from the three modules identified by the HCD algorithm. The adjusted p-value is done corrected by the FDR control procedure of \cite{benjamini1995controlling}.}
\centering
\begin{tabular}{rlcrr}
  \hline
Module & Annotation Data Set & Gene Set & raw p-val & BH p-val \\ 
  \hline
  1 & GO Molecular Function & Protein Threonine Kinase Activity & $<0.01$ & 0.05 \\ 
    1 & GO Molecular Function & Protein Serine Kinase Activity & $<0.01$ & 0.05 \\ 
    1 & GO Molecular Function & ATP Binding & $<0.01$ & 0.12 \\ 
    1 & GO Cellular Component & Catalytic Complex & $<0.01$ & 0.03 \\ 
    1 & GO Cellular Component & Intracellular Non-membrane Bounded Organelle & $<0.01$ & 0.04 \\ 
    2 & GO Biological Process & chromatin organization  & $<0.01$ & $<0.01$ \\ 
    2 & GO Biological Process & regulation of chromatin organization & $<0.01$ & $<0.01$ \\ 
    2 & GO Biological Process & regulation of histone methylation  & $<0.01$ & $<0.01$ \\ 
    2 & GO Biological Process & establishment of RNA localization  & $<0.01$ & $<0.01$ \\ 
    2 & GO Biological Process & RNA localization  & $<0.01$ & $<0.01$ \\ 
    2 & GO Molecular Function & Transcription Coregulator Activity & $<0.01$ & $<0.01$ \\ 
    2 & GO Molecular Function & Chromatin Binding & $<0.01$ & $<0.01$ \\ 
    2 & GO Cellular Component & Nuclear Protein-Containing Complex & $<0.01$ & $<0.01$ \\ 
    2 & GO Cellular Component & Chromosome & $<0.01$ & 0.01 \\ 
    2 & GO Cellular Component & Nucleosome & $<0.01$ & $<0.01$ \\ 
    3 & GO Biological Process & regulation of transcription by RNA polymerase II  & $<0.01$ & $<0.01$ \\ 
    3 & GO Biological Process & regulation of gene expression & $<0.01$ & $<0.01$ \\ 
    3 & GO Biological Process & regulation of RNA metabolic process & $<0.01$ & $<0.01$ \\ 
    3 & GO Biological Process & regulation of nucleobase-containing compound metabolic process  & $<0.01$ & $<0.01$ \\ 
    3 & GO Biological Process & regulation of RNA biosynthetic process & $<0.01$ & $<0.01$ \\ 
    3 & GO Molecular Function & nucleic acid binding & $<0.01$ & $<0.01$ \\ 
    3 & GO Molecular Function & heterocyclic compound binding  & $<0.01$ & $<0.01$ \\ 
    3 & GO Molecular Function & organic cyclic compound binding  & $<0.01$ & $<0.01$ \\ 
    3 & GO Molecular Function & DNA binding & $<0.01$ & $<0.01$ \\ 
    3 & GO Molecular Function & RNA binding & $<0.01$ & $<0.01$ \\ 
    3 & GO Cellular Component & nucleoplasm  & $<0.01$ & $<0.01$ \\ 
    3 & GO Cellular Component & nucleus  & $<0.01$ & $<0.01$ \\ 
    3 & GO Cellular Component & nuclear lumen  & $<0.01$ & $<0.01$ \\ 
    3 & GO Cellular Component & nuclear protein-containing complex  & $<0.01$ & $<0.01$ \\ 
    3 & GO Cellular Component & intracellular organelle  & $<0.01$ & $<0.01$ \\ 
   \hline
\end{tabular}
}
\end{table}

\newpage

\section{Additional results about the differential correlation analysis of genes}\label{sec:othermethods}

Figure~\ref{fig:Cai-hist} shows the spectral projection scores of \cite{cai2017computational} on the 2000 genes. It can be seen that there is no clear cluster pattern observed (compared with Figure~\ref{fig:SS-hist}). Running Kmeans algorithm of \cite{cai2017computational} would result in a

\begin{figure}[H]
\centering
\includegraphics[width=0.75\textwidth]{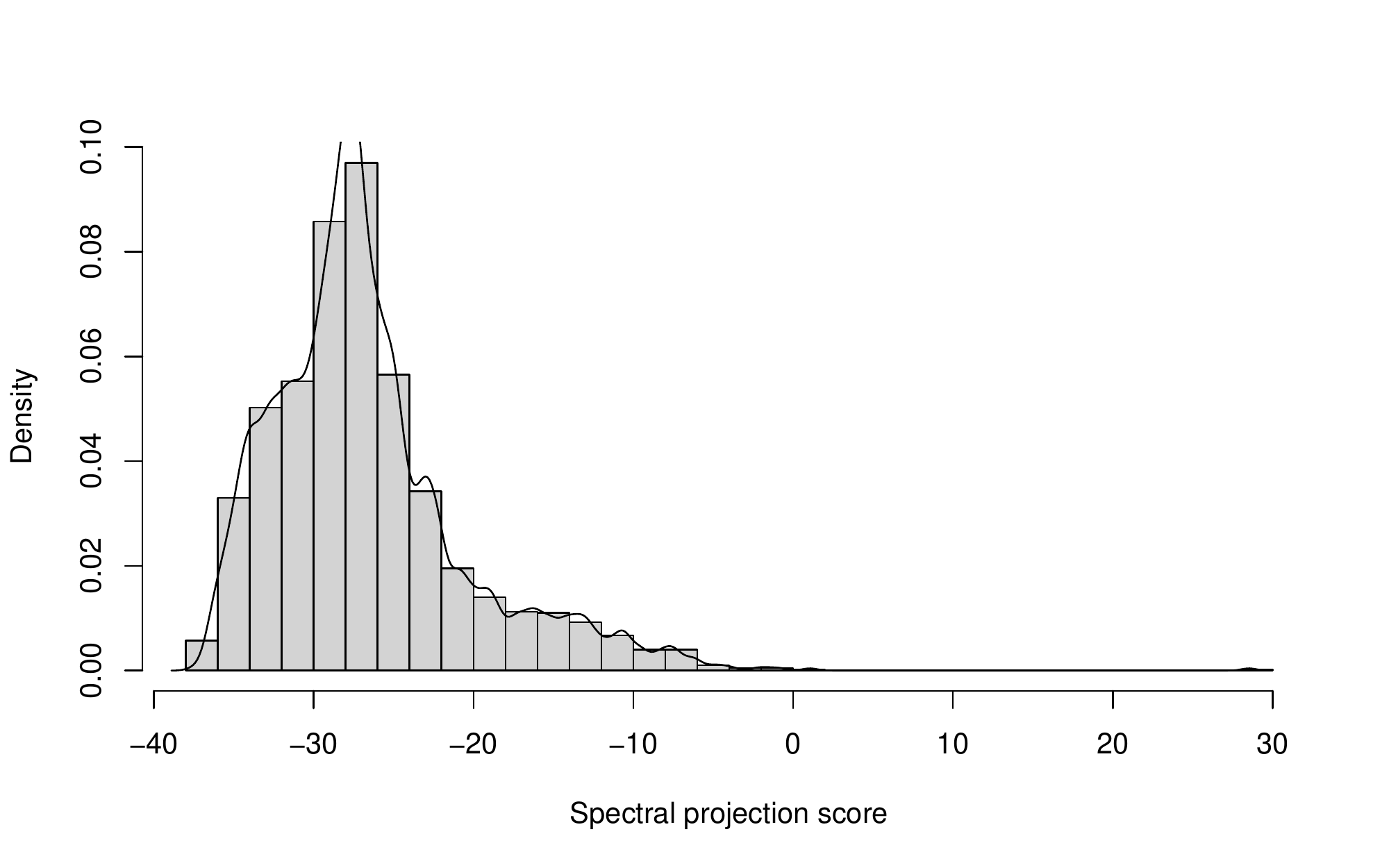}
\caption{The histogram and density of the scores 2000 genes by the spectral projection method of \cite{cai2017computational}.}
\label{fig:Cai-hist}
\end{figure}

The adaptive LAS selects 350 genes. The GSS selects 175 genes, which is a strict subset of the adaptive LAS selection. Next, we will mainly compare our results with the adaptive LAS selection. The average correlation level for the selected genes is 0.54 in the GBM group, and 0.48 in the normal group. Compared with the selection of our method, this gap is much smaller. Furthermore, we repeat the same type of network analysis on the 350 selected genes from the adaptive LAS. The networks and the detected modules are shown in Figure~\ref{fig:HCD-LAS}.  The variation patterns of module 3 and 4 between the GBM and normal group are not consistently reflected in the LGG group. Therefore, the result is considered as inferior to the result based on the spectral screening selection.

\begin{figure}[H]
\centering
\includegraphics[width=\textwidth]{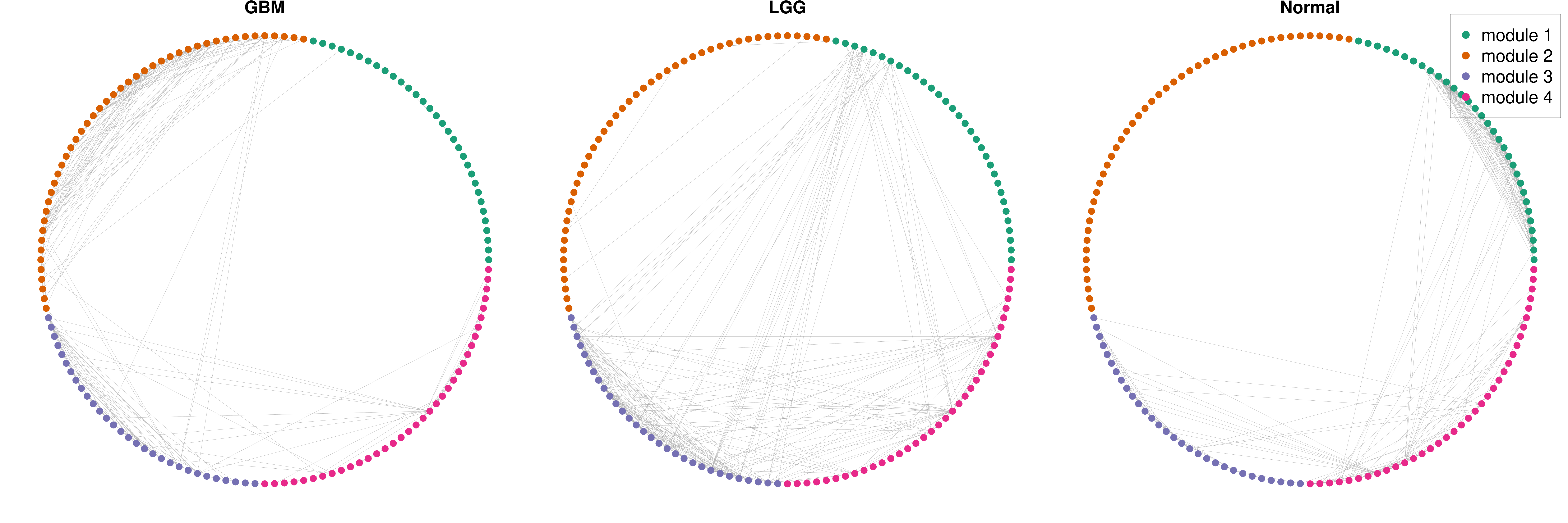}
\caption{The networks of GBM, LGG and normal groups and the 4 gene modules identified by the HCD algorithm based on the 350 genes selected by the adaptive LAS algorithm.}
\label{fig:HCD-LAS}
\end{figure}

\end{appendix}

\end{document}